\newcommand{\GLtn}{\mathrm{GL}(2n,\mathbb{R})}
\newcommand{\Sptn}{\mathrm{Sp}(2n,\mathbb{R})}
\newcommand{\Otn}{\mathrm{O}(2n)}
\newcommand{\Mtn}{\mathrm{M}_{2n}(\mathbb{R})}
\newcommand{\Un}{\mathrm{U}(n)}
\newcommand{\sptn}{\mathfrak{sp}(2n,\mathbb{R})}
\newcommand{\otn}{\mathfrak{o}(2n)}
\newcommand{\gltn}{\mathfrak{gl}(2n,\mathbb{R})}
\newcommand{\JSp}{\operatorname{J}_{\mathrm{Sp}}}
\newcommand{\JO}{\operatorname{J}_{\mathrm{O}}}
\newcommand{\jSp}{\operatorname{j}_{\mathrm{Sp}}}
\newcommand{\jO}{\operatorname{j}_{\mathrm{O}}}
\newcommand{\cjSp}{\check{\operatorname{j}}_{\mathrm{Sp}}}
\newcommand{\Ad}{\mathrm{Ad}}
\newcommand{\ad}{\mathrm{ad}}
\newcommand{\omegacan}{\omega_{\mathbb{R}^{2n}}}
\newcommand{\Jomega}{\mathcal{J}(\mathbb{R}^{2n},\omegacan)}
\newcommand{\Grass}{\mathcal{G}_-(\mathbb{C}^{2n},\omegacan)}
\newcommand{\mcA}{\mathcal{A}}
\newcommand{\mcB}{\mathcal{B}}
\newtheorem{proposition}{Proposition}[section]
\newtheorem{corollary}[proposition]{Corollary}
\newtheorem{lemma}[proposition]{Lemma}
\theoremstyle{definition}
\newtheorem{definition}[proposition]{Definition}
\theoremstyle{remark}
\newtheorem*{remark}{Remark}
\begin{document}

\title{The frame bundle picture of Gaussian wave packet dynamics in semiclassical mechanics}
\author{Paul Skerritt$^1$}
\addtocounter{footnote}{1}
\footnotetext{Department of Mathematics, University of Surrey, Guildford GU2 7XH, United Kingdom.
\texttt{p.skerritt@surrey.ac.uk}}
\date{}

\maketitle

\begin{abstract}
Recently Ohsawa \cite{Ohsawa2015} has studied the Marsden-Weinstein-Meyer quotient of the manifold $T^*\mathbb{R}^n\times T^*\mathbb{R}^{2n^2}$ under a $\operatorname{O}(2n)$-symmetry, and has used this quotient to describe the relationship between two different parametrisations of Gaussian wave packet dynamics commonly used in semiclassical mechanics. In this paper we suggest a new interpretation of (a subset of) the unreduced space as being the frame bundle $\mathcal{F}(T^*\mathbb{R}^n)$ of $T^*\mathbb{R}^n$. We outline some advantages of this interpretation, and explain how it can be extended to more general symplectic manifolds using the notion of the diagonal lift of a symplectic form due to Cordero and de Le\'on \cite{CorderodeLeon1983}.
\end{abstract}

\section{Introduction}

\subsection{Motivation}
The Gaussian wave packet ansatz is frequently used in the study of the time-dependent Schr\"odinger equation, and its semiclassical limit. In particular, the wavefunction
\[
\psi(x) = \exp\left\lbrace \frac{i}{\hbar} \left[\frac{1}{2}(x-q)^\top (\mathcal{A}+i\mathcal{B})(x-q) + p^\top (x-q) + (\phi+i\delta) \right] \right\rbrace,
\]
parametrised by $(q,p)\in T^*\mathbb{R}^n$, $\phi,\delta\in\mathbb{R}$, and $\mathcal{A}+i\mathcal{B}\in\Sigma_n = \lbrace W\in \mathrm{M}_n(\mathbb{C})\,|\, W^\top =W, \mathrm{Im}\ W>0 \rbrace$, is well known to be an \emph{exact} solution of the Schr\"odinger equation for quadratic Hamiltonians, provided the parameters satisfy certain ODEs \cite{Heller1976}.

Faou and Lubich \cite{FaouLubich2006, Lubich2008} have shown that these parameter ODEs constitute a Hamiltonian system, and recently Ohsawa and Leok \cite{OhsawaLeok2013} have clarified the symplectic structure underlying these Hamiltonian dynamics. Their main observation is that $\Sigma_n$ is a symplectic manifold, the well-studied \emph{Siegel upper half plane} \cite{Siegel1973}. 

In \cite{Hagedorn1980}, Hagedorn introduced another parametrisation of Gaussian wavefunctions, by replacing $\mathcal{A}+i\mathcal{B}\in\Sigma_n$ by $PQ^{-1}$, where $Q, P\in\mathrm{M}_n(\mathbb{C})$ satisfying certain algebraic relations. In this parametrisation, the ODEs governing the parameter evolution are somewhat simpler. In \cite{Ohsawa2015}, Ohsawa explained how to interpret the relation between $Q,P$ and $\mathcal{A}+i\mathcal{B}$ as an instance of symplectic reduction on the symplectic manifold $T^*\mathbb{R}^n\times T^*\mathbb{R}^{n^2}$ under a certain right $\Otn$-action. The matrices $Q, P$ are seen to be coordinates on a level set of the momentum map corresponding to the $\Otn$-action, and symplectic reduction at this momentum level gives the Siegel upper half plane $\Sigma_n$. The dynamics in the unreduced and reduced spaces are then related by the quotient map in the usual way, explaining how the Hagedorn ODEs can also be interpreted as a Hamiltonian system.

\subsection{Main Results and Outline}
The purpose of the present paper is to reinterpret Ohsawa's results, and to suggest some possible extensions. Specifically, we interpret an open subset of Ohsawa's unreduced space $T^*\mathbb{R}^n\times T^*\mathbb{R}^{2n^2}$ as the \emph{frame bundle} $\mathcal{F}(T^*\mathbb{R}^n)\simeq T^*\mathbb{R}^n \times \GLtn$ of $T^*\mathbb{R}^n$, and give an intrinsic description of the symplectic structure on $\mathcal{F}(T^*\mathbb{R}^n)$. There are several advantages to this picture:
\begin{itemize}
\item there exists a dual pair structure on $\GLtn$, generated by left multiplication by $\Sptn$ and right multiplication by $\Otn$ . The existence of this dual pair allows one to realise the symplectic reduced space of $\mathcal{F}(T^*\mathbb{R}^n)$ at \emph{all} values of $\Otn$-momentum in terms of adjoint orbits of $\sptn$. 
\footnote{Although we consider one particular adjoint orbit in this paper, it should be possible to associate semiclassical wavefunctions with arbitrary integral adjoint orbits. We plan to explore this possibility in future work.} 
\item
the physical picture of moving frames on $T^*\mathbb{R}^n$ is intuitive, and gives insight into the definition of the Hamiltonian on the unreduced space. It also has overlap with other work on semiclassical mechanics, in particular the \emph{nearby orbit approximation} of Littlejohn \cite{Littlejohn1986}.
\item
the frame bundle picture can be extended to more general symplectic manifolds than $T^*\mathbb{R}^n$, using the results of Cordero et al. \cite{CorderodeLeon1983} on the lifting of symplectic structures to frame bundles.
\end{itemize}

We now outline the stucture of the paper. In Section \ref{sec:sympMn}, we describe a symplectic structure on the set $\Mtn$ of $2n\times 2n$ dimensional real matrices, demonstrate that the obvious left $\Sptn$- and right $\Otn$-actions are Hamiltonian, and calculate the corresponding momentum maps, and their Lie algebra-valued counterparts. In Section \ref{sec:dualpairs}, we review some properties of dual pairs that will be needed in the sequel. In Section \ref{sec:dualGLtn}, we restrict our symplectic manifold to $\GLtn\subset \Mtn$, and demonstrate that the two group actions give a dual pair structure on this restricted space. In Section \ref{sec:reduction}, we discuss reduction of $\GLtn$ under the $\Otn$-action at a particular value of momentum, use the results of Section \ref{sec:dualpairs} to describe the reduced space as an adjoint orbit in $\sptn$, and give a geometric description of this adjoint orbit as the set of $\omegacan$-compatible complex structures, where $\omegacan$ is the canonical symplectic structure on $\mathbb{R}^{2n}$. In Section \ref{sec:UHPcoords}, we use the standard isomorphism of $\omegacan$-compatible complex structures with the positive Lagrangian Grassmannian to introduce coordinates on the adjoint orbit, and demonstrate that in these coordinates the symplectic projection and reduced symplectic form agree with those obtained by Ohsawa. In Section \ref{sec:dynamics}, starting with the cotangent symplectic structure $\omega=\sum_{\alpha=1}^n dq^\alpha\wedge dp_\alpha$ and an arbitrary Hamiltonian $H$ on $T^*\mathbb{R}^n$, we describe the $\hbar$-diagonal symplectic structure $\Omega^{\hbar}$ and $\hbar$-lifted Hamiltonian $H^{\hbar}$ on $\mathcal{F}(T^*\mathbb{R}^n)$, which agree with those defined by Ohsawa on their common domain. We give a geometric interpretation of $H^{\hbar}$, demonstrate that for quadratic $H$ one obtains the dynamics of Hagedorn, and emphasise that the $\Otn$-invariance of $H^{\hbar}$ leads to conservation of \emph{symplectic} frame. We also describe the dynamics on the reduced space, and show it reproduces those of Heller. Finally in Section \ref{sec:deLeon}, we explain how the results of Cordero et al. \cite{CorderodeLeon1983} can be used to extend the construction of a symplectic structure on the frame bundle to arbitrary symplectic manifolds.

\section{The symplectic geometry of \texorpdfstring{$\Mtn$}{M(2n,R)}}\label{sec:sympMn}

In this section, we outline the symplectic structure on $\Mtn$, describe two natural symplectic group actions on $\Mtn$, and calculate the momentum maps associated with these actions. 

\subsection{The symplectic form on \texorpdfstring{$\Mtn$}{M(2n,R)}}
Take $\mathbb{R}^{2n}$ with its canonical symplectic structure
\[
\omegacan(v,w) := v^\top\mathbb{J} w,
\]
where $\mathbb{J} := \begin{bmatrix} 0_n & I_n \\ -I_n & 0_n \end{bmatrix}$. The $2n$-fold direct sum of $(\mathbb{R}^{2n}, \omegacan)$ is naturally isomorphic to the space $\mathrm{M}_{2n}(\mathbb{R})$ of real $2n\times2n$ matrices with the symplectic form
\[
\Omega(E,F) := \sum_{a=1}^{2n}\omegacan(E_a,F_a) = \operatorname{Tr}(E^\top\mathbb{J}F),
\]
where $E_a$ denotes the $a$th column of $E \in\mathrm{M}_{2n}(\mathbb{R})$, considered as a vector in $\mathbb{R}^{2n}$. We can consider $(\Omega, \Mtn)$ as a symplectic \emph{manifold}, using the canonical isomorphism $T_E\Mtn \simeq \Mtn$, and we denote the induced symplectic form by $\Omega$ also. Explicitly, if $V_E = \frac{d}{dt}\big\vert_{t=0}(E+tV)$ for $V\in\Mtn$ etc., then
\[
\Omega_E(V_E,W_E) := \Omega(V,W) = \operatorname{Tr}(V^\top\mathbb{J}W).
\]

\subsection{Commuting symplectic actions}
Let $L_S:\Mtn\rightarrow\Mtn$ denote left multiplication by $S\in\Sptn$, and $R_O:\Mtn\rightarrow\Mtn$ denote right multiplication by $O\in\Otn$. Using the expressions for the differentials of these actions
\[
T_EL_S(V_E) = (SV)_{SE}\qquad\textrm{and}\qquad T_ER_O(V_E) = (VO)_{EO}
\]
it is straightforward to check that $L$ and $R$ define left and right symplectic actions respectively, i.e., 
for any $E\in\Mtn$,
\[
(L_S^*\Omega)_E = \Omega_E \qquad\textrm{and}\qquad (R_O^*\Omega)_E = \Omega_E.
\]
The momentum maps corresponding to these two symplectic actions have a standard form.
\begin{proposition}
\begin{enumerate}[(i)]
\item
The momentum map $\JSp:\Mtn\rightarrow \sptn^*$ corresponding to the left $\Sptn$-action is given by
\[
\langle \JSp(E), \zeta\rangle = \frac{1}{2}\Omega(\zeta E,E).
\]
\item
The momentum map $\JO:\Mtn\rightarrow \otn^*$ corresponding to the right $\Otn$-action is given by
\[
\langle \JO(E),\xi \rangle = \frac{1}{2}\Omega(E\xi,E).
\]
\end{enumerate}
\end{proposition}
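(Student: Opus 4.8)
The plan is to verify directly that each proposed function satisfies the defining relation of a momentum map: for every Lie algebra element $\eta$ the function $\langle J(\cdot),\eta\rangle$ must be a Hamiltonian for the infinitesimal generator vector field $\eta_{\Mtn}$ of the action, i.e. $d\langle J(\cdot),\eta\rangle = \iota_{\eta_{\Mtn}}\Omega$. Since both actions are linear, the two cases are structurally identical, so I would treat (i) in detail and indicate the parallel steps for (ii).

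First I would compute the infinitesimal generators. Differentiating $L_{\exp(t\zeta)}(E)=\exp(t\zeta)E$ at $t=0$ gives the generator $\zeta_{\Mtn}(E)=(\zeta E)_E$ for $\zeta\in\sptn$, and similarly differentiating $R_{\exp(t\xi)}(E)=E\exp(t\xi)$ gives $\xi_{\Mtn}(E)=(E\xi)_E$ for $\xi\in\otn$. Contracting with $\Omega$ under the canonical identification $T_E\Mtn\simeq\Mtn$ then yields the right-hand sides $\iota_{\zeta_{\Mtn}}\Omega\big\vert_E(W_E)=\operatorname{Tr}(E^\top\zeta^\top\mathbb{J}W)$ and $\iota_{\xi_{\Mtn}}\Omega\big\vert_E(W_E)=\operatorname{Tr}(\xi^\top E^\top\mathbb{J}W)$ respectively.

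Next I would differentiate the proposed quadratic functions. Writing $\langle\JSp(E),\zeta\rangle=\tfrac12\operatorname{Tr}(E^\top\zeta^\top\mathbb{J}E)$ and differentiating in the direction $W_E$ produces two terms, $\tfrac12\bigl[\operatorname{Tr}(W^\top\zeta^\top\mathbb{J}E)+\operatorname{Tr}(E^\top\zeta^\top\mathbb{J}W)\bigr]$, and the whole content of the proof is to show that these two terms coincide, so that the factor $\tfrac12$ cancels and the expression matches $\operatorname{Tr}(E^\top\zeta^\top\mathbb{J}W)$. This is the one step that requires the Lie algebra structure: using transpose-invariance and cyclicity of the trace together with the antisymmetry $\mathbb{J}^\top=-\mathbb{J}$, the first trace becomes $-\operatorname{Tr}(E^\top\mathbb{J}\zeta W)$, and the symplectic condition $\zeta^\top\mathbb{J}+\mathbb{J}\zeta=0$ converts this into the second trace. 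The identical manoeuvre for (ii), with the antisymmetry $\xi^\top=-\xi$ of $\xi\in\otn$ playing the role of the symplectic condition, shows the two halves of $d\langle\JO(\cdot),\xi\rangle$ likewise agree.

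The only genuine obstacle is this cross-term identity; everything else is bookkeeping. It is worth remarking that both formulas are instances of the standard expression $\langle J(x),\eta\rangle=\tfrac12\Omega(\eta\cdot x,x)$ for the momentum map of a linear symplectic representation, so one could alternatively invoke that general result, but I prefer the direct computation as it keeps the section self-contained and makes transparent exactly where the defining properties of $\sptn$ and $\otn$ enter.
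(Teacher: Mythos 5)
Your proposal is correct, but it takes a more explicit route than the paper, which disposes of both parts in one line by citing the general formula $\langle \operatorname{J}(v),\eta\rangle = \tfrac12\Omega(\eta\cdot v,v)$ for the momentum map of a linear symplectic action on a symplectic vector space (Marsden--Ratiu, Section 12.4, Example (a)). What you do is essentially unpack the proof of that general fact in this particular case: you correctly identify the defining relation $d\langle \operatorname{J},\eta\rangle = \iota_{\eta_{\Mtn}}\Omega$, compute the generators $(\zeta E)_E$ and $(E\xi)_E$, and isolate the one nontrivial step, namely the cross-term identity $\operatorname{Tr}(W^\top\zeta^\top\mathbb{J}E)=\operatorname{Tr}(E^\top\zeta^\top\mathbb{J}W)$, which you establish via transposition, cyclicity, $\mathbb{J}^\top=-\mathbb{J}$, and the condition $\zeta^\top\mathbb{J}+\mathbb{J}\zeta=0$ (respectively $\xi^\top=-\xi$ for part (ii)); this is exactly where the hypothesis that the group acts by \emph{symplectic} (resp.\ orthogonal) matrices enters, and the computation checks out. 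The paper's citation is shorter and signals that nothing special is happening; your version is self-contained and makes visible why the factor $\tfrac12$ disappears upon differentiation, which is genuinely clarifying for a reader who has not internalised the quadratic-momentum-map formula. Neither approach has a gap; the only thing you might add for completeness is the observation (made separately in the paper's remark) that these momentum maps are automatically equivariant, again by the general theory of linear actions.
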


\begin{proof}
Both results follow from the general expression for the momentum map of a linear symplectic action on a symplectic vector space\textemdash see for example \cite[Section 12.4, Example (a)]{MarsdenRatiu1999}.
\end{proof}

\begin{remark} Both momentum maps are easily seen to be equivariant,
\[
\JSp(SE) = \Ad_{S^{-1}}^*(\JSp(E)) \qquad\textrm{and}\qquad \JO(EO) = \Ad_O^*(\JO(E)),
\]
(again, a general result for linear symplectic actions).
\end{remark}

\subsection{Lie algebra-valued momentum maps}
For any (real) Lie subalgebra $\mathfrak{g}\subset\mathfrak{gl}(N,\mathbb{C})$, consider the trace form $\llangle\cdot,\cdot\rrangle:\mathfrak{g}\times\mathfrak{g}\rightarrow\mathbb{C}$ defined by
\[
\llangle \xi,\zeta\rrangle := \frac{1}{2}\operatorname{Tr}(\xi\zeta).
\]
If $\mathfrak{g}$ is invariant under conjugate transpose, then $\llangle\cdot,\cdot\rrangle$ is non-degenerate, as is its real part, since 
\[
\llangle \xi,\xi^\dagger\rrangle = \operatorname{Re}\llangle \xi,\xi^\dagger\rrangle>0\textrm{ for }\xi\ne 0.
\]
In particular, it is non-degenerate and real-valued on $\sptn$ and $\otn$. Using the trace form to identify Lie algebras with their duals, we can write down Lie algebra-valued versions of the momentum maps discussed in the previous section.

\begin{proposition}\phantomsection\label{prop:lamom}
\begin{enumerate}[(i)]
\item
The $\sptn$-valued momentum map $\jSp:\Mtn\rightarrow \sptn$ corresponding to $\JSp$ is given by
\[
\jSp(E) = -EE^\top\mathbb{J}.
\]
\item
The $\otn$-valued momentum map $\jO:\Mtn\rightarrow\otn$ corresponding to $\JO$ is given by
\[
\jO(E) = -E^\top\mathbb{J}E.
\]
\end{enumerate}
\end{proposition}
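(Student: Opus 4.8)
The plan is to exploit the defining property of the Lie algebra-valued momentum maps. Since the trace form $\llangle\cdot,\cdot\rrangle$ is non-degenerate and real-valued on each of $\sptn$ and $\otn$, the isomorphism $\xi\mapsto\llangle\xi,\cdot\rrangle$ identifies $\jSp(E)\in\sptn$ and $\jO(E)\in\otn$ as the unique elements satisfying $\langle\JSp(E),\zeta\rangle=\llangle\jSp(E),\zeta\rrangle$ for all $\zeta\in\sptn$ and $\langle\JO(E),\xi\rangle=\llangle\jO(E),\xi\rrangle$ for all $\xi\in\otn$. So in each case I would start from the explicit formula for the real-valued momentum map of the previous section, expand $\Omega$ via $\Omega(V,W)=\operatorname{Tr}(V^\top\mathbb{J}W)$, and manipulate the resulting trace until it takes the form $\tfrac{1}{2}\operatorname{Tr}(X\zeta)=\llangle X,\zeta\rrangle$; non-degeneracy then lets me read off the answer as $X$.

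For part (i) I would write $\Omega(\zeta E,E)=\operatorname{Tr}(E^\top\zeta^\top\mathbb{J}E)$ and invoke the defining relation of $\sptn$, namely $\zeta^\top\mathbb{J}=-\mathbb{J}\zeta$, to replace the factor $\zeta^\top\mathbb{J}$. Cyclicity of the trace then regroups the remaining factors into $EE^\top\mathbb{J}$, giving $\tfrac{1}{2}\Omega(\zeta E,E)=\tfrac{1}{2}\operatorname{Tr}\bigl((-EE^\top\mathbb{J})\zeta\bigr)=\llangle -EE^\top\mathbb{J},\zeta\rrangle$, whence $\jSp(E)=-EE^\top\mathbb{J}$. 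Part (ii) is entirely parallel: from $\Omega(E\xi,E)=\operatorname{Tr}(\xi^\top E^\top\mathbb{J}E)$ I would use antisymmetry $\xi^\top=-\xi$ of elements of $\otn$ in place of the symplectic relation, and cyclicity, to obtain $\tfrac{1}{2}\Omega(E\xi,E)=\llangle -E^\top\mathbb{J}E,\xi\rrangle$, so that $\jO(E)=-E^\top\mathbb{J}E$.

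The one genuinely necessary step — and the only point at which the argument could fail — is confirming that the candidate expressions actually lie in the Lie algebras in question. This matters because the trace-form identification represents a functional on $\sptn$ (resp. $\otn$) by an element of that subalgebra; if $-EE^\top\mathbb{J}$ were to lie outside $\sptn$, the correct $\jSp(E)$ would be its trace-form projection onto $\sptn$ rather than the expression itself. I would therefore verify directly, using only $\mathbb{J}^\top=-\mathbb{J}$ and the symmetry of $EE^\top$, that $-EE^\top\mathbb{J}$ satisfies $X^\top\mathbb{J}+\mathbb{J}X=0$, and likewise that $-E^\top\mathbb{J}E$ is antisymmetric. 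Both verifications are one-line computations, so beyond this membership check I anticipate no obstacle; the remainder is routine trace algebra.
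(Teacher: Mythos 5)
Your proposal is correct and follows essentially the same route as the paper's proof: expand $\tfrac{1}{2}\Omega(\zeta E,E)$ as a trace, use cyclicity together with $\zeta^\top\mathbb{J}=-\mathbb{J}\zeta$ (resp. $\xi^\top=-\xi$) to bring it to the form $\llangle -EE^\top\mathbb{J},\zeta\rrangle$ (resp. $\llangle -E^\top\mathbb{J}E,\xi\rrangle$), and then verify that the candidate actually lies in $\sptn$ (resp. $\otn$), a check the paper also carries out explicitly. Your remark that this membership check is the one non-routine step is well taken and matches the paper's emphasis.
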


\begin{proof}
\begin{enumerate}[(i)]
\item
Denoting the natural pairing of $\sptn^*$ with $\sptn$ by $\langle\cdot,\cdot\rangle$, we have that for $\zeta\in\sptn$,
\begin{align*}
\langle \JSp(E),\zeta\rangle &= \frac{1}{2}\Omega(\zeta E,E) = \frac{1}{2}\operatorname{Tr}(E^\top\zeta^\top\mathbb{J} E) \\
&= \frac{1}{2}\operatorname{Tr}(EE^\top\zeta^\top\mathbb{J})
= -\frac{1}{2}\operatorname{Tr}(EE^\top\mathbb{J}\zeta) \qquad\textrm{since }\zeta\in\sptn\\
&= \llangle -EE^\top\mathbb{J},\zeta\rrangle.
\end{align*}
Also
\[
(-EE^\top\mathbb{J})^\top\mathbb{J} + \mathbb{J}(-EE^\top\mathbb{J}) = \mathbb{J}EE^\top\mathbb{J} - \mathbb{J}EE^\top\mathbb{J} = 0.
\]
So $-EE^\top\mathbb{J}\in\sptn$, and the result follows.
\item
Similar.
\end{enumerate}
\end{proof}

\begin{remark}
Again, the Lie algebra-valued momentum maps are equivariant
\[
\jSp(SE) = \Ad_S(\jSp(E)) \qquad\textrm{and}\qquad \jO(EO) = \Ad_{O^{-1}}(\jO(E)).
\]
\end{remark}

\section{Mutually transitive actions and dual pairs}\label{sec:dualpairs}

In Section \ref{sec:dualGLtn} we will demonstrate that the left $\Sptn$-action and right $\Otn$-action define a dual pair structure on a suitable subset of $\Mtn$. 
In anticipation, we here introduce the notion of \emph{mutually transitive actions}. We then explain how mutual transitivity allow us to view reduced spaces of one action as coadjoint orbits of the other. Finally, we indicate the relation of mutual transitivity to the more standard notion of a dual pair.

A fuller treatment of dual pairs and related concepts can be found in \cite[Section IV.7]{LibermannMarle1987}, \cite[Chapter 11]{OrtegaRatiu2004}, and \cite{BalleierWurzbacher2012}. We also refer to \cite{SkerrittVizman2019} for a more thorough discussion of mutual transitivity.

\subsection{Mutually transitive actions}

Let $(N,\Omega)$ be a symplectic manifold, and let $\Phi_1:G_1\times N \rightarrow N$ and $\Phi_2:G_2\times N\rightarrow N$ be symplectic actions. We assume $N$, $G_1$, and $G_2$ are all finite-dimensional.

\begin{definition}\label{defn:mt}
We say the actions $\Phi_1, \Phi_2$ are \emph{mutually transitive} if the following three properties hold:
\begin{itemize}
\item
$\Phi_1$ and $\Phi_2$ commute,
\item
$\Phi_1$ and $\Phi_2$ are Hamiltonian actions, with corresponding equivariant momentum maps $\operatorname{J}_1$ and $\operatorname{J}_2$,
\item
each level set of $\operatorname{J}_1$ is a $G_2$-orbit and vice versa, i.e., for any $x\in N$,
\[
\operatorname{J}_1^{-1}(\operatorname{J}_1(x)) = G_2\cdot x \qquad\textrm{and}\qquad \operatorname{J}_2^{-1}(\operatorname{J}_2(x)) = G_1\cdot x.
\]
\end{itemize}
\end{definition}

Denoting the coadjoint orbit in $\mathfrak{g}_i^*$ through $\mu_i$ by $\mathcal{O}_{\mu_i}$, we then have

\begin{proposition}
Let $\Phi_1, \Phi_2$ be mutually transitive actions.Then for all $x\in N$,
\[
\operatorname{J}_1^{-1}(\mathcal{O}_{\operatorname{J}_1(x)}) = \operatorname{J}_2^{-1}(\mathcal{O}_{\operatorname{J}_2(x)}).
\]
\end{proposition}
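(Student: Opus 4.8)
The plan is to show the two sets coincide by proving mutual inclusion, exploiting equivariance of the momentum maps together with the defining property of mutual transitivity that level sets of one momentum map are orbits of the other group. I would first establish the following alternative description of each side: I claim that
\[
\operatorname{J}_1^{-1}(\mathcal{O}_{\operatorname{J}_1(x)}) = G_1 \cdot \operatorname{J}_1^{-1}(\operatorname{J}_1(x)) = G_1 \cdot (G_2 \cdot x),
\]
where the last equality uses the third bullet of Definition \ref{defn:mt}. The first equality here is the key computation: by equivariance of $\operatorname{J}_1$, for any $g\in G_1$ we have $\operatorname{J}_1(g\cdot y) = \Ad_{g^{-1}}^*(\operatorname{J}_1(y))$, so the $G_1$-orbit of a level set $\operatorname{J}_1^{-1}(\mu_1)$ is swept through precisely the coadjoint orbit $\mathcal{O}_{\mu_1}$, and conversely every point mapping into $\mathcal{O}_{\mu_1}$ can be translated by a suitable $g\in G_1$ back to the level set $\operatorname{J}_1^{-1}(\mu_1)$. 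Carrying out the analogous computation for $\operatorname{J}_2$ gives $\operatorname{J}_2^{-1}(\mathcal{O}_{\operatorname{J}_2(x)}) = G_2 \cdot (G_1 \cdot x)$.

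Once both sides are rewritten as these double orbits, the result reduces to the observation that $G_1 \cdot (G_2 \cdot x) = G_2 \cdot (G_1 \cdot x)$, which is immediate from the first bullet of Definition \ref{defn:mt}, namely that the two actions commute: $\Phi_1(g_1,\Phi_2(g_2,x)) = \Phi_2(g_2,\Phi_1(g_1,x))$ for all $g_1\in G_1$, $g_2\in G_2$. Thus both expressions describe the single set $\{\,\Phi_1(g_1,\Phi_2(g_2,x)) : g_1\in G_1,\ g_2\in G_2\,\}$, and the two level-sets-of-coadjoint-orbits coincide.

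I would carry out the steps in this order: (1) prove the general identity $\operatorname{J}_i^{-1}(\mathcal{O}_{\operatorname{J}_i(x)}) = G_i \cdot \operatorname{J}_i^{-1}(\operatorname{J}_i(x))$ from equivariance, stating it as a self-contained claim since it holds for any equivariant momentum map independent of mutual transitivity; (2) substitute the mutual transitivity hypothesis to replace each fiber $\operatorname{J}_i^{-1}(\operatorname{J}_i(x))$ by the opposite group's orbit through $x$; and (3) invoke commutativity to identify the resulting double orbits. The main obstacle is step (1), and specifically the inclusion $\operatorname{J}_i^{-1}(\mathcal{O}_{\operatorname{J}_i(x)}) \subseteq G_i \cdot \operatorname{J}_i^{-1}(\operatorname{J}_i(x))$: given a point $y$ with $\operatorname{J}_i(y)$ lying in the coadjoint orbit of $\operatorname{J}_i(x)$, I must produce an explicit group element $g$ with $g\cdot y$ in the fiber over $\operatorname{J}_i(x)$. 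This follows from equivariance—if $\operatorname{J}_i(y) = \Ad_g^*(\operatorname{J}_i(x))$ then $\operatorname{J}_i(g\cdot y) = \operatorname{J}_i(x)$—but it is worth stating carefully since it is the only place where the group-orbit structure of coadjoint orbits is used, and it is precisely the observation that makes mutual transitivity the right hypothesis for relating reduced spaces to coadjoint orbits.
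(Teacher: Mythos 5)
Your proposal is correct and follows essentially the same route as the paper: rewrite each preimage of a coadjoint orbit as a double orbit $G_1\cdot(G_2\cdot x)$ resp.\ $G_2\cdot(G_1\cdot x)$ using equivariance and the mutual-transitivity property, then identify the two via commutativity of the actions. Your more careful justification of the identity $\operatorname{J}_i^{-1}(\mathcal{O}_{\operatorname{J}_i(x)}) = G_i\cdot\operatorname{J}_i^{-1}(\operatorname{J}_i(x))$ is a welcome elaboration of a step the paper passes over with a one-line annotation, but it is not a different argument.
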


\begin{proof}
\begin{align*}
\mathrm{J}_1^{-1}(\mathcal{O}_{\operatorname{J}_1(x)}) &= G_1\cdot \operatorname{J}_1^{-1}(\operatorname{J}_1(x)) \qquad\textrm{since $\operatorname{J}_1$ is $G_1$-equivariant} \\
&= G_1\cdot (G_2\cdot x) \qquad\textrm{since $\Phi_2$ is transitive on the fibres of $\operatorname{J}_1$}\\
&= G_2\cdot (G_1\cdot x) \qquad \textrm{since the actions $\Phi_1$ and $\Phi_2$ commute}\\
&= G_2\cdot\operatorname{J}_2^{-1}(\operatorname{J}_2(x)) \qquad\textrm{since $\Phi_1$ is transitive on the fibres of $\operatorname{J}_2$}\\
&= \operatorname{J}_2^{-1}(\mathcal{O}_{\operatorname{J}_2(x)}) \qquad \textrm{since $\operatorname{J}_2$ is $G_2$-equivariant}.
\end{align*}
\end{proof}

\begin{corollary}[{\cite[Theorem 2.8(i)]{BalleierWurzbacher2012}}]\label{coro}
There exists a one-to-one correspondence between coadjoint orbits in $\operatorname{J}_1(N)$ and $\operatorname{J}_2(N)$ given by
\[
\mathcal{O}_{\mu_1}\mapsto \operatorname{J}_2(\operatorname{J}_1^{-1}(\mathcal{O}_{\mu_1})) { =\operatorname{J}_2(\operatorname{J}_1^{-1}(\mu_1))}
\]
or equivalently
\[
\qquad\qquad\mathcal{O}_{\operatorname{J}_1(x)}\mapsto \mathcal{O}_{\operatorname{J}_2(x)} \qquad\textrm{for }x\in N.
\]
\end{corollary}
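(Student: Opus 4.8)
The plan is to take the preceding Proposition as the engine and reduce the statement to two verifications: that the assignment $\mathcal{O}_{\mu_1}\mapsto \operatorname{J}_2(\operatorname{J}_1^{-1}(\mathcal{O}_{\mu_1}))$ is well defined and lands on a single coadjoint orbit, and that the symmetric assignment, with the roles of $\Phi_1$ and $\Phi_2$ interchanged, furnishes a two-sided inverse. Since both momentum maps are equivariant, each image $\operatorname{J}_i(N)$ is $\Ad^*$-invariant, hence a union of coadjoint orbits, so a coadjoint orbit in $\operatorname{J}_i(N)$ may always be written as $\mathcal{O}_{\operatorname{J}_i(x)}$ for some $x\in N$; I would work throughout with such representatives.

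First I would show that the two displayed formulas for the map agree and produce a single orbit. Fix $x\in N$ with $\operatorname{J}_1(x)=\mu_1$. Mutual transitivity gives $\operatorname{J}_1^{-1}(\mu_1)=G_2\cdot x$, so by $G_2$-equivariance of $\operatorname{J}_2$ we get $\operatorname{J}_2(\operatorname{J}_1^{-1}(\mu_1))=\operatorname{J}_2(G_2\cdot x)=\mathcal{O}_{\operatorname{J}_2(x)}$, a full coadjoint orbit. On the other hand the Proposition gives $\operatorname{J}_1^{-1}(\mathcal{O}_{\mu_1})=\operatorname{J}_2^{-1}(\mathcal{O}_{\operatorname{J}_2(x)})$; applying $\operatorname{J}_2$ and using that $\mathcal{O}_{\operatorname{J}_2(x)}\subseteq \operatorname{J}_2(N)$, so that $\operatorname{J}_2(\operatorname{J}_2^{-1}(\mathcal{O}_{\operatorname{J}_2(x)}))=\mathcal{O}_{\operatorname{J}_2(x)}$, yields $\operatorname{J}_2(\operatorname{J}_1^{-1}(\mathcal{O}_{\mu_1}))=\mathcal{O}_{\operatorname{J}_2(x)}$. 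This simultaneously establishes the bracketed equality and identifies the output as the orbit $\mathcal{O}_{\operatorname{J}_2(x)}$.

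Next I would check independence of the representative, which is the crux. Suppose $\operatorname{J}_1(x),\operatorname{J}_1(x')\in\mathcal{O}_{\mu_1}$, i.e., $\mathcal{O}_{\operatorname{J}_1(x)}=\mathcal{O}_{\operatorname{J}_1(x')}$. Then $x'\in\operatorname{J}_1^{-1}(\mathcal{O}_{\operatorname{J}_1(x)})=\operatorname{J}_2^{-1}(\mathcal{O}_{\operatorname{J}_2(x)})$ by the Proposition, so $\operatorname{J}_2(x')\in\mathcal{O}_{\operatorname{J}_2(x)}$ and hence $\mathcal{O}_{\operatorname{J}_2(x')}=\mathcal{O}_{\operatorname{J}_2(x)}$. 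Thus the target orbit depends only on $\mathcal{O}_{\mu_1}$, so the map $\mathcal{O}_{\operatorname{J}_1(x)}\mapsto\mathcal{O}_{\operatorname{J}_2(x)}$ is well defined.

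Finally, since Definition \ref{defn:mt} is symmetric in $\Phi_1$ and $\Phi_2$, the identical construction produces a well-defined map $\mathcal{O}_{\operatorname{J}_2(x)}\mapsto\mathcal{O}_{\operatorname{J}_1(x)}$. Because both maps may be computed from a common representative $x$ — for the composite, $\operatorname{J}_2(x)\in\mathcal{O}_{\operatorname{J}_2(x)}$ lets us reuse $x$ — the two are mutually inverse, giving the claimed bijection. I expect the only genuine obstacle to be the well-definedness argument of the third paragraph; the remainder is bookkeeping with equivariance and the set equality supplied by the Proposition, together with the mild observation that $\operatorname{J}_2\circ\operatorname{J}_2^{-1}$ recovers a full orbit precisely because $\operatorname{J}_2(N)$ is a union of orbits.
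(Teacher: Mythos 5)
Your proposal is correct: the paper states this Corollary without proof, treating it as an immediate consequence of the preceding Proposition, and your argument is exactly the routine verification being left implicit (well-definedness via $\operatorname{J}_1^{-1}(\mathcal{O}_{\operatorname{J}_1(x)}) = \operatorname{J}_2^{-1}(\mathcal{O}_{\operatorname{J}_2(x)})$, the two formulas agreeing by equivariance and transitivity on fibres, and invertibility by the $1\leftrightarrow 2$ symmetry of Definition \ref{defn:mt}). No gaps; this matches the intended reasoning.
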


\subsection{The relation between coadjoint orbits and reduced spaces}

From now on we assume that all group actions $\Phi_i$ are both free and proper (see \cite{SkerrittVizman2019} for a discussion when these conditions do not hold). 

For $\mu_1\in\mathfrak{g}_1^*$, let $(G_1)_{\mu_1}$ denote the coadjoint stabiliser of $\mu_1$, and $N_{\mu_1}$ the quotient space $\operatorname{J}_1^{-1}(\mu_1)/(G_1)_{\mu_1}$ at $\mu_1$. It is well known that $N_{\mu_1}$ may be given a smooth structure making $\sigma_1:\operatorname{J}_1^{-1}(\mu_1)\rightarrow N_{\mu_1}$ a submersion, and a symplectic structure $\Omega_{N_{\mu_1}}$ satisfying 
\begin{equation}\label{eqn:redform}
i_{\mu_1}^*\Omega = \sigma_1^*\Omega_{N_{\mu_1}},
\end{equation}
where $i_{\mu_1}:\operatorname{J}_1^{-1}(\mu_1)\rightarrow N$ is the inclusion (see for example \cite[Theorem 6.1.1 (i)]{OrtegaRatiu2004}). The pair $(N_{\mu_1}, \Omega_{N_{\mu_1}})$ is called the \emph{reduced space} or \emph{Marsden-Weinstein-Meyer quotient}. 

For mutually transitive actions, the reduced space under one action is symplectomorphic to a coadjoint orbit of the other action. To prove this result, we first give the following lemma:

\begin{lemma}\label{lem:momsub}
The smooth map $\operatorname{J}_2:N\rightarrow \mathfrak{g}_2^*$ restricts to a smooth submersion $\operatorname{J}_2:\operatorname{J}_1^{-1}(\mu_1)\rightarrow \mathcal{O}_{\mu_2}$ satisfying
\begin{equation}\label{eqn:KKSform}
i_{\mu_1}^*\Omega = \operatorname{J}_2^*\Omega^+_{\mathcal{O}_{\mu_2}},
\end{equation}
where $\mathcal{O}_{\mu_2}\subset\mathfrak{g}_2^*$ is a coadjoint orbit, and $\Omega^+_{\mathcal{O}_{\mu_2}}$ is the positive KKS symplectic form on $\mathcal{O}_{\mu_2}$.
\end{lemma}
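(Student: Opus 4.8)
The plan is to fix a point $x_0\in\operatorname{J}_1^{-1}(\mu_1)$ (the statement is vacuous if the level set is empty), set $\mu_2:=\operatorname{J}_2(x_0)$, and establish three things in turn: that $\operatorname{J}_2$ maps $\operatorname{J}_1^{-1}(\mu_1)$ \emph{onto} the coadjoint orbit $\mathcal{O}_{\mu_2}$, that this restriction is a submersion, and finally the form identity \eqref{eqn:KKSform}. The first point is immediate from the structure of mutually transitive actions: by Definition \ref{defn:mt} we have $\operatorname{J}_1^{-1}(\mu_1)=\operatorname{J}_1^{-1}(\operatorname{J}_1(x_0))=G_2\cdot x_0$, so every point of the level set is of the form $\Phi_2(g,x_0)$. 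Applying the $G_2$-equivariance of $\operatorname{J}_2$ then sends this orbit exactly onto the coadjoint orbit through $\mu_2$, i.e.\ $\operatorname{J}_2(\operatorname{J}_1^{-1}(\mu_1))=\mathcal{O}_{\mu_2}$.

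Second, since the $G_2$-action is free and proper, the orbit map $g\mapsto\Phi_2(g,x_0)$ is a diffeomorphism $G_2\xrightarrow{\sim}\operatorname{J}_1^{-1}(\mu_1)$. Under this identification, and writing $\mathcal{O}_{\mu_2}\cong G_2/(G_2)_{\mu_2}$ as a homogeneous space, the restricted $\operatorname{J}_2$ becomes the canonical projection $G_2\to G_2/(G_2)_{\mu_2}$, which is a surjective submersion. The same identification shows that $T_x\operatorname{J}_1^{-1}(\mu_1)$ is spanned by the infinitesimal generators $\xi_N(x):=\frac{d}{dt}\big\vert_{t=0}\Phi_2(\exp(t\xi),x)$, $\xi\in\mathfrak{g}_2$. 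This is the crucial simplification afforded by mutual transitivity: every tangent vector to the level set is a generator, so it suffices to verify the form identity on pairs $\xi_N,\eta_N$.

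Third, and this is the heart of the argument, I would check \eqref{eqn:KKSform} pointwise on such generators. Using the defining relation $i_{\xi_N}\Omega=\mathbf{d}\langle\operatorname{J}_2,\xi\rangle$ of the momentum map, together with equivariance to differentiate $\operatorname{J}_2$ along $\eta_N$, one computes $\Omega_x(\xi_N(x),\eta_N(x))=\langle\operatorname{J}_2(x),[\xi,\eta]\rangle$. On the other side, equivariance identifies the pushforward $T_x\operatorname{J}_2\bigl(\xi_N(x)\bigr)$ with the infinitesimal generator of the coadjoint action, namely $-\ad^*_\xi\,\nu$ at $\nu=\operatorname{J}_2(x)$; evaluating the positive KKS form on these generators gives $(\operatorname{J}_2^*\Omega^+_{\mathcal{O}_{\mu_2}})_x(\xi_N,\eta_N)=\langle\nu,[\xi,\eta]\rangle$, which matches the previous computation. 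Since the generators span the tangent space, this proves the identity.

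The main obstacle is not conceptual but is a matter of careful sign bookkeeping: the handedness and equivariance convention for $\Phi_2$, the sign in the infinitesimal generator of the coadjoint action, and the sign convention distinguishing the positive from the negative KKS form must all be tracked consistently, since it is precisely their alignment that forces the orbit form appearing in \eqref{eqn:KKSform} to be the \emph{positive} one $\Omega^+_{\mathcal{O}_{\mu_2}}$ rather than its negative.
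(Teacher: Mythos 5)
Your proposal is correct and follows essentially the same route as the paper: identify the level set with the $G_2$-orbit, use equivariance to see that $\operatorname{J}_2$ maps it submersively onto a coadjoint orbit, and verify \eqref{eqn:KKSform} on infinitesimal generators via $i_{\xi_N}\Omega=\mathbf{d}\langle\operatorname{J}_2,\xi\rangle$ and equivariance, yielding $\langle\operatorname{J}_2(x),[\xi,\eta]\rangle$ on both sides. Your treatment of the submersion step (via the orbit map and the homogeneous-space presentation $G_2/(G_2)_{\mu_2}$) is slightly more explicit than the paper's one-line appeal to equivariance, but it is the same argument.
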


\begin{proof}
Since $\Phi_1$ is free, the momentum map $\operatorname{J}_1$ is a submersion \cite[Corollary 4.5.13]{OrtegaRatiu2004}, and hence the level set $\operatorname{J}_1^{-1}(\mu_1)$ is an embedded submanifold of $N$ \cite[Section 1.1.13]{OrtegaRatiu2004}. Since $\operatorname{J}_1^{-1}(\mu_1)$ is a  $G_2$-orbit, and $\operatorname{J}_2$ is $G_2$-equivariant, the image $\operatorname{J}_2(\operatorname{J}_1^{-1}(\mu_1))$ equals some coadjoint orbit $\mathcal{O}_{\mu_2}\subset \mathfrak{g}_2^*$, which is an initial submanifold \cite[Proposition 2.3.12 (i)]{OrtegaRatiu2004}. Hence $\operatorname{J}_2:N\rightarrow \mathfrak{g}_2^*$ restricts to a smooth map $\operatorname{J}_2:\operatorname{J}_1^{-1}(\mu_1)\rightarrow \mathcal{O}_{\mu_2}$, and by $G_2$-equivariance this restriction is a submersion.

Taking $\xi,\zeta\in\mathfrak{g}_2$, $x\in\operatorname{J}_1^{-1}(\mu_1)$, and using the $G_2$-equivariance of $\operatorname{J}_2$, we have
\begin{align*}
\Omega_x(\xi\cdot x, \zeta\cdot x) &= d_x\langle \operatorname{J}_2,\xi\rangle (\zeta\cdot x) = \zeta\cdot x\langle \operatorname{J}_2,\xi\rangle = \langle -\ad_\zeta^*\operatorname{J}_2(x),\xi\rangle \\
&= \langle \operatorname{J}_2(x),[\xi,\zeta]\rangle = (\Omega^+_{\mathcal{O}_{\mu_2}})_{\operatorname{J}_2(x)}(-\ad_\xi^*\operatorname{J}_2(x),-\ad_\zeta^*\operatorname{J}_2(x)).
\end{align*}
Again using equivariance of $\operatorname{J}_2$, plus the fact that $\operatorname{J}_1^{-1}(\mu_1)$ is a $G_2$-orbit, gives \eqref{eqn:KKSform}.
\end{proof}

\begin{proposition}[{\cite[Theorem 2.8(iii)]{BalleierWurzbacher2012}}]\label{prop:redcoad}
Let $\Phi_1, \Phi_2$ be mutually transitive actions on $N$. Then any reduced space under the $G_1$-action is symplectomorphic to a coadjoint orbit in $\operatorname{J}_2(N)\subset\mathfrak{g}_2^*$, and similarly with 1 and 2 interchanged. Explicitly, for $x\in N$,
\[
N_{\operatorname{J}_1(x)} \simeq \mathcal{O}_{\operatorname{J}_2(x)}, \qquad N_{\operatorname{J}_2(x)} \simeq \mathcal{O}_{\operatorname{J}_1(x)},
\]
via a resp. $G_2$- and $G_1$-equivariant symplectomorphism.
\end{proposition}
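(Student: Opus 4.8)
The plan is to show that the momentum map $\operatorname{J}_2$, restricted to $\operatorname{J}_1^{-1}(\mu_1)$ as in Lemma \ref{lem:momsub}, descends through the quotient map to the desired symplectomorphism $N_{\mu_1}\to\mathcal{O}_{\mu_2}$, where we abbreviate $\mu_1=\operatorname{J}_1(x)$ and $\mu_2=\operatorname{J}_2(x)$. The essential observation is that there are two surjective submersions out of $\operatorname{J}_1^{-1}(\mu_1)$ — the quotient map $\sigma_1:\operatorname{J}_1^{-1}(\mu_1)\rightarrow N_{\mu_1}$ and the restricted momentum map $\operatorname{J}_2:\operatorname{J}_1^{-1}(\mu_1)\rightarrow\mathcal{O}_{\mu_2}$ — and that these have \emph{exactly the same fibres}. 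Each pulls back its respective symplectic form to $i_{\mu_1}^*\Omega$, by \eqref{eqn:redform} and \eqref{eqn:KKSform} respectively, so once the fibres are shown to coincide the symplectomorphism follows formally.

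First I would identify the two families of fibres. The fibres of $\sigma_1$ are by definition the orbits of the coadjoint stabiliser $(G_1)_{\mu_1}$ acting on $\operatorname{J}_1^{-1}(\mu_1)$. For the fibres of the restricted $\operatorname{J}_2$, fix $x\in\operatorname{J}_1^{-1}(\mu_1)$; by the mutual transitivity hypothesis $\operatorname{J}_2^{-1}(\operatorname{J}_2(x))=G_1\cdot x$, so the fibre through $x$ inside the level set is $(G_1\cdot x)\cap\operatorname{J}_1^{-1}(\mu_1)$. An element $g\cdot x$ lies in $\operatorname{J}_1^{-1}(\mu_1)$ precisely when $\Ad_{g^{-1}}^*\mu_1=\operatorname{J}_1(g\cdot x)=\mu_1$, i.e. precisely when $g\in(G_1)_{\mu_1}$, by equivariance of $\operatorname{J}_1$. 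Hence $(G_1\cdot x)\cap\operatorname{J}_1^{-1}(\mu_1)=(G_1)_{\mu_1}\cdot x$, so the two submersions indeed have identical fibres.

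Next I would invoke the universal property of surjective submersions. Since $\operatorname{J}_2$ is constant on the fibres of $\sigma_1$ it descends to a smooth map $\varphi:N_{\mu_1}\rightarrow\mathcal{O}_{\mu_2}$ with $\varphi\circ\sigma_1=\operatorname{J}_2$; symmetrically $\sigma_1$ descends along $\operatorname{J}_2$ to a smooth inverse, so $\varphi$ is a diffeomorphism. That $\varphi$ is symplectic follows by pulling back: $\sigma_1^*(\varphi^*\Omega^+_{\mathcal{O}_{\mu_2}})=\operatorname{J}_2^*\Omega^+_{\mathcal{O}_{\mu_2}}=i_{\mu_1}^*\Omega=\sigma_1^*\Omega_{N_{\mu_1}}$, and since $\sigma_1$ is a surjective submersion $\sigma_1^*$ is injective on forms, so $\varphi^*\Omega^+_{\mathcal{O}_{\mu_2}}=\Omega_{N_{\mu_1}}$. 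For $G_2$-equivariance I would note that $\operatorname{J}_1^{-1}(\mu_1)$ is a $G_2$-orbit, carrying a residual $G_2$-action which descends to $N_{\mu_1}$ because $G_2$ commutes with $(G_1)_{\mu_1}\subset G_1$; the relation $\varphi\circ\sigma_1=\operatorname{J}_2$ together with $G_2$-equivariance of $\operatorname{J}_2$ then forces $\varphi$ to intertwine this residual action with the coadjoint action on $\mathcal{O}_{\mu_2}$. The case with the roles of $1$ and $2$ exchanged is verbatim identical.

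The step I expect to require the most care is the fibre identification of the second paragraph, since this is precisely where the third (mutual transitivity) clause of Definition \ref{defn:mt} enters; everything afterwards is formal manipulation of submersions and pullbacks. One should also pause to confirm that $\varphi$ is a genuine diffeomorphism rather than merely a smooth bijection, which is exactly why it matters that both $\sigma_1$ and $\operatorname{J}_2$ are submersions — the former by freeness of $\Phi_1$ and the latter by Lemma \ref{lem:momsub}.
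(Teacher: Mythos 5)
Your proposal is correct and follows essentially the same route as the paper: identify the fibres of $\sigma_1$ and of $\operatorname{J}_2\vert_{\operatorname{J}_1^{-1}(\mu_1)}$, descend to a diffeomorphism, and compare pullbacks of the two symplectic forms via \eqref{eqn:redform} and \eqref{eqn:KKSform}. Your explicit computation $(G_1\cdot x)\cap\operatorname{J}_1^{-1}(\mu_1)=(G_1)_{\mu_1}\cdot x$ is a clean way of supplying the fibre identification that the paper states as ``not difficult to show,'' and your equivariance argument matches the remark following the paper's proof.
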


\begin{proof} 
Choose $x\in N$, and let $\mu_i=\operatorname{J}_i(x),\,i=1,2$. By equivariance of $\operatorname{J}_1$ and $\operatorname{J}_2$ and the mutual transitivity property, it is not difficult to show that for any $y\in \operatorname{J}_1^{-1}(\mu_1)$, 
\[
(G_1)_{\operatorname{J}_1(y)}\cdot y = (G_2)_{\operatorname{J}_2(y)}\cdot y = G_1\cdot y \cap G_2\cdot y.
\] 
Hence the fibres of $\sigma_1:\operatorname{J}_1^{-1}(\mu_1)\rightarrow N_{\mu_1}$ and $\operatorname{J}_2:\operatorname{J}_1^{-1}(\mu_1)\rightarrow \mathcal{O}_{\mu_2}$ agree, and we get a bijection $\check{\operatorname{J}}_2:N_{\mu_1}\rightarrow \mathcal{O}_{\mu_2}$ making the following diagram commute
\begin{equation}\label{diag:redcoad}
\begin{diagram}
&&\operatorname{J}_1^{-1}(\mu_1) \\
&\ldTo^{\sigma_1} &&\rdTo^{\operatorname{J}_2}\\
N_{\mu_1} &&\rTo^{\check{\operatorname{J}}_2}&& \mathcal{O}_{\mu_2}
\end{diagram}
\end{equation}

Since $\sigma_1$ and $\operatorname{J}_2$ are submersions, $\check{\operatorname{J}}_2$ is a diffeomorphism. Equations \eqref{eqn:redform} and \eqref{eqn:KKSform}, and commutativity of \eqref{diag:redcoad}, implies that
\[
\sigma_1^*\Omega_{N_{\mu_1}} = \operatorname{J}_2^*\Omega_{\mathcal{O}^+_{\mu_2}} = \sigma_1^*\check{\operatorname{J}}_2^*\Omega_{\mathcal{O}^+_{\mu_2}}.
\]
Since $\sigma_1$ is a submersion, it follows that 
\[
\Omega_{N_{\mu_1}} = \check{\operatorname{J}}_2^*\Omega^+_{\mathcal{O}_{\mu_2}},
\] 
i.e., $\check{\operatorname{J}}_2$ is a symplectomorphism.

\end{proof}

\begin{remark}
The map $\check{\operatorname{J}}_2$ has a simple interpretation: since the $G_1$- and $G_2$-actions commute, the $G_2$-action drops to the reduced space $N_{\operatorname{J}_1(x)}$. Using $(i^{\mu_1})^*\Omega = (\sigma_1)^*\Omega_{N_{\mu_1}}$, we see that $\check{\operatorname{J}}_2$ is simply the momentum map for this reduced action, and is itself equivariant.
\end{remark}

\subsection{The relation to dual pairs}
In this subsection we make contact with the notion of dual pair (in the sense of Weinstein \cite{Weinstein1983}).

\begin{definition}[{\cite[Appendix E]{Blaom2001}}]
Let $N$ be a symplectic manifold, and $P_1, P_2$ Poisson manifolds. A pair of Poisson maps 
\[
P_1 \stackrel{\operatorname{J}_1}{\longleftarrow} N \stackrel{\operatorname{J}_2}{\longrightarrow} P_2
\]
is called a \emph{full dual pair} if $\operatorname{J}_1, \operatorname{J}_2$ are submersions, and
\[
(\ker T\operatorname{J}_1)^\Omega = \ker T\operatorname{J}_2.
\]
\end{definition}

\begin{proposition}
If $\Phi_1, \Phi_2$ are mutually transitive and free, then their momentum maps $\operatorname{J}_1, \operatorname{J}_2$ form a full dual pair.
\end{proposition}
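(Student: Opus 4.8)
The plan is to verify the two defining conditions of a full dual pair directly from the mutual transitivity hypothesis. First I would observe that since both actions $\Phi_1$ and $\Phi_2$ are free, each momentum map $\operatorname{J}_i$ is automatically a submersion by \cite[Corollary 4.5.13]{OrtegaRatiu2004} (this fact was already invoked in the proof of Lemma \ref{lem:momsub}). This disposes of the submersion requirement immediately. It remains only to establish the orthogonality relation $(\ker T\operatorname{J}_1)^\Omega = \ker T\operatorname{J}_2$, where $(\cdot)^\Omega$ denotes the $\Omega$-symplectic orthogonal complement.

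The central idea is to rewrite both sides of this identity in terms of the infinitesimal group actions. For the right-hand side, the key is that the fibres of $\operatorname{J}_1$ coincide with the $G_2$-orbits: by mutual transitivity, $\operatorname{J}_1^{-1}(\operatorname{J}_1(x)) = G_2\cdot x$. Differentiating, the tangent space to a fibre of $\operatorname{J}_1$ at $x$ is exactly the tangent space to the $G_2$-orbit, so $\ker T_x\operatorname{J}_1 = \mathfrak{g}_2\cdot x$, the space of infinitesimal generators $\{\zeta\cdot x : \zeta\in\mathfrak{g}_2\}$. Symmetrically, $\ker T_x\operatorname{J}_2 = \mathfrak{g}_1\cdot x$. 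Thus the claim reduces to showing $(\mathfrak{g}_2\cdot x)^\Omega = \mathfrak{g}_1\cdot x$ at every point $x$.

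Next I would bring in the standard fact that the symplectic orthogonal of the tangent space to a fibre of a momentum map is the tangent space to the corresponding group orbit. Concretely, for the Hamiltonian $G_1$-action with momentum map $\operatorname{J}_1$, one has the well-known identity $(\ker T_x\operatorname{J}_1)^\Omega = \mathfrak{g}_1\cdot x$, which follows from the defining relation $\Omega_x(\eta\cdot x, v) = d_x\langle\operatorname{J}_1,\eta\rangle(v)$ for $\eta\in\mathfrak{g}_1$: a vector $v$ is $\Omega$-orthogonal to all of $\mathfrak{g}_1\cdot x$ precisely when it annihilates $d_x\langle\operatorname{J}_1,\eta\rangle$ for every $\eta$, i.e.\ when $v\in\ker T_x\operatorname{J}_1$, and a dimension/non-degeneracy count then upgrades this to equality. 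Combining this with the previous paragraph, I get $(\ker T_x\operatorname{J}_1)^\Omega = \mathfrak{g}_1\cdot x = \ker T_x\operatorname{J}_2$, which is exactly the desired relation.

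I do not anticipate a serious obstacle here; the argument is essentially an assembly of the identifications $\ker T_x\operatorname{J}_1 = \mathfrak{g}_2\cdot x$ (from mutual transitivity) and $(\ker T_x\operatorname{J}_1)^\Omega = \mathfrak{g}_1\cdot x$ (the generic momentum-map orthogonality). The one point requiring a little care is the second identity: establishing $(\ker T_x\operatorname{J}_1)^\Omega = \mathfrak{g}_1\cdot x$ as an equality of subspaces rather than a mere inclusion. The inclusion $\mathfrak{g}_1\cdot x \subseteq (\ker T_x\operatorname{J}_1)^\Omega$ is immediate from the momentum-map relation, and freeness of $\Phi_1$ ensures $\dim(\mathfrak{g}_1\cdot x) = \dim\mathfrak{g}_1$, while $\operatorname{J}_1$ being a submersion gives $\dim(\ker T_x\operatorname{J}_1) = \dim N - \dim\mathfrak{g}_1$; since $\dim(\ker T_x\operatorname{J}_1)^\Omega = \dim N - \dim(\ker T_x\operatorname{J}_1) = \dim\mathfrak{g}_1$, the inclusion is forced to be an equality. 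This dimension bookkeeping is the only genuinely technical step, and it is routine.
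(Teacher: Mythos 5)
Your proof is correct and follows essentially the same route as the paper's: both use freeness to get that the $\operatorname{J}_i$ are submersions, mutual transitivity to identify $\ker T_x\operatorname{J}_2$ with the orbit tangent space $\mathfrak{g}_1\cdot x$, and the standard momentum-map identity $(\ker T_x\operatorname{J}_1)^\Omega = \mathfrak{g}_1\cdot x$ to conclude. The only difference is that you spell out the dimension count behind that last identity, which the paper simply cites as a standard result.
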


\begin{proof}
Let $x\in N$. The identity $G_1\cdot x = \operatorname{J}_2^{-1}(\operatorname{J}_2(x))$ implies that
\[
\mathfrak{g}_1\cdot x = T_x(\operatorname{J}_2^{-1}(\operatorname{J}_2(x))).
\]
A standard result says that the former equals $(\ker T_x\operatorname{J}_1)^\Omega$.
Also, as in Lemma \ref{lem:momsub}, freeness of $\Phi_1, \Phi_2$ implies that $\operatorname{J}_1, \operatorname{J}_2$ are submersions. Then $\operatorname{J}_2^{-1}(\operatorname{J}_2(x))$ is an embedded submanifold, and so 
\[
T_x(\operatorname{J}_2^{-1}(\operatorname{J}_2(x))) = \ker T_x\operatorname{J}_2.
\]
The result follows.
\end{proof}

\section{The dual pair structure on \texorpdfstring{$\GLtn\subset \Mtn$}{GL(2n,R)}}\label{sec:dualGLtn}

The subset $\GLtn$ of $\Mtn$ is open, and hence $\Omega$ restricts to an non-degenerate form on $\GLtn$. Denote the restriction of $\Omega$ and the momentum maps to $\GLtn$ by the same symbols for simplicity. 

With this restriction, we can demonstrate that the left $\Sptn$- and right $\Otn$-actions are mutually transititive (Definition \ref{defn:mt}). For this, it is more convenient to work with the equivalent double fibration
\begin{diagram}
&&\GLtn \\
&\ldTo^{\jSp} && \rdTo^{\jO}\\
\sptn &&&& \otn 
\end{diagram}

\begin{proposition}
\begin{enumerate}[(i)]
\item
The left $\Sptn$-action acts transitively on the fibres of $\jO$.
\item
The right $\Otn$-action acts transitively on the fibres of $\jSp$.
\end{enumerate}
\end{proposition}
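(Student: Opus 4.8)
The plan is to use invertibility directly: on $\GLtn$ any two frames lying in a common fibre of one momentum map differ by multiplication by an element of the opposite group, and this element can be written down explicitly as a quotient of the two frames. In each case one inclusion — that an orbit is contained in a single fibre — is the easy invariance statement ($\jO(SE)=\jO(E)$ since $S^\top\mathbb{J}S=\mathbb{J}$, and symmetrically $\jSp(EO)=\jSp(E)$), so the substance is the reverse inclusion, namely that a whole fibre is swept out by a single orbit.

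For (i), I would take $E,E'\in\GLtn$ with $\jO(E)=\jO(E')$. Since $\jO(E)=-E^\top\mathbb{J}E$, this is the identity $E^\top\mathbb{J}E=E'^\top\mathbb{J}E'$. Invertibility of $E$ lets me set $S:=E'E^{-1}$, so that $E'=SE$ with $S$ invertible, and the one-line computation
\[
S^\top\mathbb{J}S=(E^\top)^{-1}\bigl(E'^\top\mathbb{J}E'\bigr)E^{-1}=(E^\top)^{-1}\bigl(E^\top\mathbb{J}E\bigr)E^{-1}=\mathbb{J}
\]
shows $S\in\Sptn$. Hence $E'$ lies in the left $\Sptn$-orbit of $E$, proving transitivity on the fibre.

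For (ii) I would run the mirror-image argument. From $\jSp(E)=\jSp(E')$ and $\jSp(E)=-EE^\top\mathbb{J}$ I cancel the invertible $\mathbb{J}$ to obtain $EE^\top=E'E'^\top$. Setting $O:=E^{-1}E'$ (invertible, with $E'=EO$), the computation
\[
O^\top O=E'^\top\bigl(EE^\top\bigr)^{-1}E'=E'^\top\bigl(E'E'^\top\bigr)^{-1}E'=I
\]
gives $O\in\Otn$, so $E'$ lies in the right $\Otn$-orbit of $E$.

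There is really no obstacle once we have passed to $\GLtn$: the defining relation of each group ($S^\top\mathbb{J}S=\mathbb{J}$ and $O^\top O=I$) drops out of the momentum equality in a single step, and the intertwining matrix is invertible by construction, hence genuinely an element of $\Sptn$ resp.\ $\Otn$. The sole — but essential — use of the hypothesis $E\in\GLtn$ rather than $E\in\Mtn$ is in forming the quotients $E'E^{-1}$ and $E^{-1}E'$; on the full space $\Mtn$ these need not exist, the fibres are strictly larger than the orbits, and mutual transitivity fails.
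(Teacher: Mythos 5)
Your proof is correct and follows essentially the same route as the paper: form the quotient matrix ($S=E'E^{-1}$, resp.\ $O=E^{-1}E'$) and verify from the momentum equality that it satisfies the defining relation of $\Sptn$, resp.\ $\Otn$. The paper writes (i) as a chain of equivalences and dismisses (ii) as ``similar,'' so your explicit computation for (ii) and your remark on where invertibility is genuinely needed are, if anything, slightly more complete.
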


\begin{proof}
\begin{enumerate}[(i)]
\item
Using Proposition \ref{prop:lamom}, we have that for any $E, E'\in\GLtn$,
\begin{align*}
&\jO(E') = \jO(E) \\
\iff  &-(E')^\top\mathbb{J} E' = -E^\top\mathbb{J} E \\
\iff &(E'E^{-1})^\top\mathbb{J} (E'E^{-1}) = \mathbb{J}\\
\iff &E' = SE \qquad\textrm{for some }S\in\Sptn.
\end{align*}
\item
Similar.
\end{enumerate}
\end{proof}

\begin{proposition}
The left $\Sptn$-action and right $\Otn$-action are free and proper.
\end{proposition}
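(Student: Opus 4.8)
The plan is to handle freeness and properness separately, exploiting throughout that every element of $\GLtn$ is invertible.

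Freeness is immediate. For the left action, if $SE=E$ for some $E\in\GLtn$, then right-multiplying by $E^{-1}$ gives $S=I$; similarly $EO=E$ forces $O=I$. Hence both actions have trivial stabilisers everywhere, which is exactly freeness. I would state this in a single line.

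For properness I would invoke the standard criterion that a smooth action of a Lie group $G$ on $N$ is proper precisely when the map $(g,x)\mapsto(g\cdot x,x)$, $G\times N\to N\times N$, has the property that preimages of compact sets are compact. The right $\Otn$-action is then proper for a trivial reason: $\Otn$ is a \emph{compact} group, and any smooth action of a compact Lie group is automatically proper, so no further argument is needed here.

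The only genuine content is properness of the left $\Sptn$-action, since $\Sptn$ is non-compact; this is the step I expect to require the most care, though it is still elementary. The key observation is that $S$ can be recovered from the pair $(SE,E)$ via $S=(SE)E^{-1}$. Concretely, let $K\subset\GLtn\times\GLtn$ be compact with projections $K_1,K_2$ onto the two factors (both compact). If $(SE,E)\in K$, then $E\in K_2$ and $SE\in K_1$, so $S=(SE)E^{-1}$ lies in the image of $K_1\times K_2$ under the continuous map $(A,B)\mapsto AB^{-1}$ (matrix inversion being continuous on $\GLtn$), which is compact. Thus the preimage $\{(S,E)\mid(SE,E)\in K\}$ is a closed subset of the compact set $(K_1\cdot K_2^{-1})\times K_2$, hence compact, establishing properness. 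I would note that the right $\Otn$-case could be handled the same way if one preferred a uniform argument, but compactness of $\Otn$ makes that unnecessary.
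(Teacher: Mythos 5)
Your proposal is correct and takes essentially the same approach as the paper: freeness follows from invertibility of $E$, and properness from the fact that $S=(SE)E^{-1}$ depends continuously on the pair $(SE,E)$. The paper phrases properness via convergent sequences rather than preimages of compact sets, and disposes of the $\Otn$ case by the same multiplication trick rather than by compactness of the group, but these are only cosmetic differences.
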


\begin{proof}
We prove the result for the $\Sptn$-action only; the $\Otn$ case is similar.
Freeness is trivial. To prove properness: suppose we have two convergent sequences $(E_k)$ and $(S_k\cdot E_k)$ in $\GLtn$ with $E_k\rightarrow E$ and $S_k E_k\rightarrow E$. Then $S_k = S_k E_k E_k^{-1} \rightarrow E E^{-1} = I$.
\end{proof}

\begin{remark}
For any $E\in\Mtn$, it remains true that $E\cdot \Otn \subset \jSp^{-1}(\jSp(E))$ and $\Sptn\cdot E \subset \jO^{-1}(\jO(E))$. However restriction to $\GLtn\subset\Mtn$ is necessary to get equality in the second inclusion. For example
\[
\jO^{-1}(0) = \lbrace E\in\Mtn\,|\, \mathrm{span}\,E \textrm{ is an isotropic subspace of }\mathbb{R}^{2n}\rbrace,
\]
and this is not simply an orbit of $\Sptn$ (since for example the $\Sptn$-action preserves the dimension of $\mathrm{span}\,E$). It is however possible to prove that $E\cdot\Otn = \jSp^{-1}(\jSp(E))$ for \emph{all} $E\in\Mtn$.
\end{remark}

With the restriction from $\Mtn$ to $\GLtn$, the momentum map $\jSp$ acquires a nice interpretation.
As before, let $E_a$ denote the $a$th column of $E$, so $\lbrace E_1,\ldots E_{2n}\rbrace$ defines a basis of $\mathbb{R}^{2n}$. Let $g(E)$ denote a metric on $\mathbb{R}^{2n}$ with respect to which the basis $\lbrace E_a\rbrace$ is orthonormal. Then any basis related to $E$ by the right $\Otn$-action defines the same metric, i.e.,
\[
g(E\cdot O) = g(E) \qquad\textrm{for all }O\in\Otn,
\]
and in fact the set of metrics on $\mathbb{R}^{2n}$ is in 1-1 correspondence with the $\Otn$-orbits of $\GLtn$. We then have the following result:

\begin{proposition}
The metric $g(E)$ and canonical symplectic form $\omegacan$ are related by
\[
\omegacan(u,v) = g(E)(\jSp(E)u,v).
\]
\end{proposition}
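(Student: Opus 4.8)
The plan is to reduce the statement to a single matrix identity by first writing $g(E)$ explicitly as a bilinear form on $\mathbb{R}^{2n}$. Since the columns $\{E_a\}$ are declared to be $g(E)$-orthonormal and $E\in\GLtn$ is invertible, every $u\in\mathbb{R}^{2n}$ has coordinate vector $E^{-1}u$ relative to this basis, so orthonormality forces
\[
g(E)(u,v) = (E^{-1}u)^\top(E^{-1}v) = u^\top (EE^\top)^{-1} v.
\]
Thus the Gram matrix of $g(E)$ in the standard basis is $(EE^\top)^{-1}$. As a consistency check with the preceding discussion, the relation $g(EO)=g(E)$ for $O\in\Otn$ is immediate from $(EO)(EO)^\top = EE^\top$.

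Next I would substitute the formula $\jSp(E)=-EE^\top\mathbb{J}$ from Proposition \ref{prop:lamom}(i) into the right-hand side and compute directly. Using the matrix form of $g(E)$,
\[
g(E)(\jSp(E)u,v) = (\jSp(E)u)^\top (EE^\top)^{-1} v = u^\top\,\jSp(E)^\top (EE^\top)^{-1}\, v.
\]
Since $\jSp(E)^\top = (-EE^\top\mathbb{J})^\top = -\mathbb{J}^\top EE^\top = \mathbb{J}EE^\top$, where I have used $\mathbb{J}^\top=-\mathbb{J}$ and the symmetry of $EE^\top$, the middle factor collapses:
\[
\jSp(E)^\top (EE^\top)^{-1} = \mathbb{J}EE^\top(EE^\top)^{-1} = \mathbb{J}.
\]
Hence $g(E)(\jSp(E)u,v) = u^\top\mathbb{J}v = \omegacan(u,v)$, which is exactly the claimed identity.

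The computation is essentially forced once the matrix representation of $g(E)$ is established, so there is no genuine analytic obstacle here; the calculation is routine matrix algebra relying on $\mathbb{J}^\top=-\mathbb{J}$ and the symmetry of $EE^\top$. The only step requiring care is the first one, namely justifying that $g(E)(u,v)=u^\top(EE^\top)^{-1}v$ is the correct coordinate expression for the metric making $\{E_a\}$ orthonormal, and verifying that $(EE^\top)^{-1}$ is symmetric positive-definite so that $g(E)$ is a bona fide metric. Both facts follow immediately from the invertibility of $E$, so I expect the remainder to go through without difficulty.
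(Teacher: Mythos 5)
Your proposal is correct and follows essentially the same route as the paper: establish the Gram matrix $g(E)(s_a,s_b)=[(EE^\top)^{-1}]_{ab}$ in the standard basis, substitute $\jSp(E)=-EE^\top\mathbb{J}$, and collapse $\jSp(E)^\top(EE^\top)^{-1}$ to $\mathbb{J}$ using $\mathbb{J}^\top=-\mathbb{J}$ and symmetry of $EE^\top$. You merely spell out the justification of the Gram matrix (via coordinate vectors $E^{-1}u$) in slightly more detail than the paper does.
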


\begin{proof}
If $\lbrace s_a \rbrace$ is the standard basis of $\mathbb{R}^{2n}$ (so $E_a = \sum_{b=1}^{2n}E_{ba}s_b$), the definition of $g(E)$ implies that
\[
g(E)(s_a,s_b) = [(E E^\top)^{-1}]_{ab}.
\]
From this, it follows that
\begin{align*}
g(E)(\jSp(E)u,v) &= (\jSp(E)u)^\top (E E^\top)^{-1} v = u^\top (-E E^\top\mathbb{J})^\top (E E^\top)^{-1}v \\
&= u^\top \mathbb{J} v = \omegacan(u,v).
\end{align*}
\end{proof}

Inverting this identity to give
\[
g(E)(u,v) = \omegacan((\jSp(E))^{-1}u,v),
\]
we see that $\jSp(E)$ defines the metric $g(E)$, and so essentially corresponds to it. In the next section we will consider this correspondence in the particular case that the ordered basis $(E_a)$ forms a \emph{symplectic} frame of $(\mathbb{R}^{2n}, \omegacan)$.

\section{Reduction}\label{sec:reduction}

Following Ohsawa \cite{Ohsawa2015}, we consider in this section the Marsden-Weinstein quotient for the right $\Otn$-action at a particular value of momentum. Using the correspondence between reduced spaces and coadjoint orbits provided by the dual pair structure, we will give a geometrical characterisation of this space. Using this characterisation, we will explain how to naturally introduce a global coordinate chart on the reduced space, reproducing Ohsawa's description of the space as the Siegel upper half plane.

\subsection{Reduction through a general point}
It will be more convenient to work with the Lie-algebra valued momentum maps $\jO$ and $\jSp$. For $\xi\in\mathfrak{g}$, now let $G_\xi$ denote the \emph{adjoint} stabiliser of $\xi$, and $\mathcal{O}_\xi\subset \mathfrak{g}$ the \emph{adjoint} orbit through $\xi$. Writing $N=\GLtn$, the reduced space through $E\in \GLtn$ is
\[
N_{\jO(E)} = \frac{\jO^{-1}(\jO(E))} {\Otn_{\jO(E)}}.
\]
Using the trace form to transfer the Poisson structure from $\sptn^*$ to
$\sptn$, Proposition \ref{prop:redcoad} tells use there exists a $\Sptn$-equivariant symplectomorphism
\[
\cjSp: N_{\jO(E)} \rightarrow \mathcal{O}_{\jSp(E)}.
\]

\subsection{Reduction through the identity}
We consider the reduced space for the left $\Otn$-action through $I=I_{2n}\in\GLtn$. From Proposition \ref{prop:lamom}(ii) we see that the level set of $\jO$ through $I$ is the set
\[
\jO^{-1}(\jO(I)) = \lbrace E\in\GLtn\,|\, -E^\top\mathbb{J} E = -\mathbb{J} \rbrace = \Sptn,
\]
which corresponds to the set of \emph{symplectic} frames of $(\mathbb{R}^{2n},\omegacan)$. Meanwhile,
\[
\Otn_{\jO(I)} = \lbrace O\in\Otn\,|\, \Ad_O(-\mathbb{J}) = -\mathbb{J} \rbrace = \Sptn\cap \Otn \simeq \Un,
\]
and we see that the reduced space $N_{\jO(I)}$ is isomorphic to $\Sptn/\Un$, as originally obtained by Ohsawa \cite{Ohsawa2015}.

\subsection{Geometric interpretation of the adjoint orbit \texorpdfstring{$\mathcal{O}_{\jSp(I)}$}{}}
As explained above, there is a $\Sptn$-equivariant map
\[
\cjSp:N_{\jO(I)} \rightarrow \mathcal{O}_{\jSp(I)}.
\]
We now give an alternative geometric characterisation of $\mathcal{O}_{\jSp(I)}$ in terms of complex structures on $\mathbb{R}^{2n}$.

\begin{definition}
Let $j:\mathbb{R}^{2n}\rightarrow\mathbb{R}^{2n}$ be a complex structure on $\mathbb{R}^{2n}$, i.e., an $\mathbb{R}$-linear map satisfying $j^2=-I$. The map $j$ is called \emph{$\omegacan$-compatible} if it satisfies the two properties
\begin{enumerate}[(i)]
\item
$j$ is symplectic (i.e., $j\in\Sptn$), and
\item
the bilinear form $g_j(u,v):= \omegacan(u,jv)$ is positive-definite.
\end{enumerate}
Denote the space of $\omegacan$-compatible complex structures on $\mathbb{R}^{2n}$ by $\Jomega$.
\end{definition}

\begin{proposition}
The adjoint orbit $\mathcal{O}_{\jSp(I)}\subset \sptn$ is precisely the set $\Jomega$.
\end{proposition}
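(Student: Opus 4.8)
The first step is to identify the orbit concretely. Since $\jSp(E) = -EE^\top\mathbb{J}$, we have $\jSp(I) = -\mathbb{J}$, so that
\[
\mathcal{O}_{\jSp(I)} = \lbrace \Ad_S(-\mathbb{J}) = -S\mathbb{J}S^{-1} \,\vert\, S\in\Sptn\rbrace.
\]
The plan is to prove the two inclusions $\mathcal{O}_{\jSp(I)}\subseteq\Jomega$ and $\Jomega\subseteq\mathcal{O}_{\jSp(I)}$ separately. The guiding principle is that $-\mathbb{J}$ is itself the ``standard'' $\omegacan$-compatible complex structure: indeed $(-\mathbb{J})^2 = \mathbb{J}^2 = -I$, one checks directly that $-\mathbb{J}\in\Sptn$, and $g_{-\mathbb{J}}(u,v) = \omegacan(u,-\mathbb{J}v) = u^\top v$ is the Euclidean inner product. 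Thus the proposition amounts to the classical fact that $\Sptn$ acts transitively by conjugation on $\Jomega$, with $-\mathbb{J}$ lying in this orbit.

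For the inclusion $\mathcal{O}_{\jSp(I)}\subseteq\Jomega$, I would take $j = -S\mathbb{J}S^{-1}$ and verify the three defining properties directly. That $j^2 = S\mathbb{J}^2 S^{-1} = -I$ is immediate. For symplecticity, I would note that $j$ lies in the Lie algebra $\sptn$ (being an element of an adjoint orbit), so $j^\top\mathbb{J} = -\mathbb{J}j$; combining this with $j^2=-I$ gives $j^\top\mathbb{J}j = -\mathbb{J}j^2 = \mathbb{J}$, hence $j\in\Sptn$. Finally, using $S^\top\mathbb{J}S = \mathbb{J}$ (equivalently $\mathbb{J}S = (S^{-1})^\top\mathbb{J}$) one computes $g_j(u,v) = \omegacan(u,jv) = (S^{-1}u)^\top(S^{-1}v)$, which is positive-definite. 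These are all routine.

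The substance of the proof is the reverse inclusion $\Jomega\subseteq\mathcal{O}_{\jSp(I)}$, which is where I expect the main difficulty to lie. Given $j\in\Jomega$, I must exhibit $S\in\Sptn$ with $S^{-1}jS = -\mathbb{J}$. First I would record two structural facts about $g_j$: it is symmetric (using $j\in\Sptn$ and $j^{-1}=-j$), and $j$ is $g_j$-orthogonal, since $g_j(ju,jv) = g_j(u,v)$. These let me run an inductive, Gram--Schmidt-type construction: choose a $g_j$-unit vector $e_1$, set $f_1 := je_1$, and check that $e_1,f_1$ are $g_j$-orthonormal; the $g_j$-orthogonal complement of their span is $j$-invariant, so the procedure repeats to produce a $g_j$-orthonormal basis $\lbrace e_1,\dots,e_n,f_1,\dots,f_n\rbrace$ with $f_k = je_k$. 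Using the identity $\omegacan(u,w) = -g_j(u,jw)$, one then verifies that this basis is in fact a symplectic basis of $(\mathbb{R}^{2n},\omegacan)$.

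Letting $S$ be the matrix with these basis vectors as its columns, $S$ is symplectic, and the relations $je_k = f_k$, $jf_k = -e_k$ translate precisely into $S^{-1}jS = -\mathbb{J}$, giving $j = \Ad_S(-\mathbb{J})\in\mathcal{O}_{\jSp(I)}$. The main obstacle is organising this basis construction cleanly and confirming that $g_j$-orthonormality forces the symplectic-basis relations; once $g_j$ is known to be a genuine inner product with $j$ orthogonal, the remaining identities follow from the interplay of $\omegacan$, $g_j$, and $j$.
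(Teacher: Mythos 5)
Your proposal is correct and follows essentially the same strategy as the paper: the forward inclusion by direct verification of the three defining properties of $\Jomega$ for $\Ad_S(-\mathbb{J})$, and the reverse inclusion by observing that $j$ is $g_j$-orthogonal and then producing a $g_j$-orthonormal basis adapted to $j$ that turns out to be a symplectic basis, whence $j=\Ad_S(-\mathbb{J})$ for the corresponding frame matrix $S$. The only (minor) divergence is in how that adapted basis is obtained: the paper invokes the normal form theorem for orthogonal transformations and uses $j^2=-I$ to force the eigenvalues to be $\pm i$, whereas you build the basis by hand via an inductive Gram--Schmidt argument on $j$-invariant orthogonal complements, a self-contained variant of the same idea.
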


\begin{proof}
An arbitrary element of $\zeta\in\mathcal{O}_{\jSp(I)}$ is of the form $\zeta = \Ad_S[\jSp(I)] = \Ad_S[-\mathbb{J}]$. Since $(-\mathbb{J})^2 = -I$ we have $\zeta^2 = -I$, and since $-\mathbb{J}\in\Sptn$ we have $\zeta = \Ad_S(-\mathbb{J}) \in\Sptn$. Finally, writing $\zeta = -SS^\top\mathbb{J}$, we see that for $u\in\mathbb{R}^{2n}-\lbrace 0\rbrace$,
\[
g_\zeta(u,u) = \omegacan(u,\zeta u) = u^\top\mathbb{J} (-SS^\top\mathbb{J})u = u^\top(S^\top\mathbb{J})^\top(S^\top\mathbb{J})u>0.
\]
So we have that $\mathcal{O}_{\jSp(I_{2n})}\subset \Jomega$.

Conversely, suppose $j\in\Jomega$. Then
\[
g_j(ju,jv) = \omegacan(ju,j^2v) = \omegacan(u,jv) = g_j(u,v),
\]
the second identity following since $j$ is symplectic. Hence $j$ is orthogonal with respect to the metric $g_j$. A standard theorem guarantees the existence of a real orthonormal basis $\lbrace u_\alpha, v_\alpha\,|\,\alpha=1,\ldots, n\rbrace$ of $\mathbb{R}^{2n}$ such that $j(u_\alpha - iv_\alpha)  = e^{i\theta_\alpha}(u_\alpha - iv_\alpha)$. The identity $j^2 = -I$ implies that $e^{i\theta_\alpha} = \pm i$ for all $\alpha$, and by switching $u_\alpha$ and $v_\alpha$ if necessary we can choose $e^{i\theta_\alpha} = i$ for all $\alpha$. So
\[
j u_\alpha = v_\alpha, \qquad jv_\alpha = -u_\alpha.
\]
It can be verified that $(u_1,\ldots, u_n,v_1,\ldots, v_n)$ forms a symplectic basis for $(\mathbb{R}^{2n},\omegacan)$. For example
\[
\omegacan(u_\alpha, v_\beta) = \omegacan(u_\alpha, ju_\beta) = g_j(u_\alpha, u_\beta) = \delta_{\alpha\beta}
\]
by orthonormality. Taking $E = \begin{bmatrix}u_1 &\ldots& u_n & v_1 &\ldots &v_n \end{bmatrix}\in\Sptn$, the above relations can be written $jE = -E\mathbb{J}$, which implies
\[
j = -E\mathbb{J} E^{-1} = \Ad_E[-\mathbb{J}] =\Ad_E[\jSp(I)].
\]
Thus $\Jomega\subset\mathcal{O}_{\jSp(I)}$.
\end{proof}

Adapting diagram (\ref{diag:redcoad}) to our case, we obtain
\begin{equation}\label{diag:Spfirst}
\begin{diagram}
&&\jO^{-1}(\jO(I)) = \Sptn \\
&\ldTo^{\sigma_\mathrm{O}} && \rdTo^{\jSp}\\
N_{\jO(I)} && \rTo^{\cjSp} && \mathcal{O}_{\jSp(I)} = \Jomega
\end{diagram}
\end{equation}
with $\cjSp$ an $\Sptn$-equivariant symplectomorphism.
An equivalent diagram was recently obtained independently by Ohsawa and Tronci \cite{OhsawaTronci2017}.

\section{The upper half plane coordinates}\label{sec:UHPcoords}
As demonstrated in the previous section, the adjoint orbit $\mathcal{O}_{\jSp(I)}$ can be naturally viewed as the space of $\omegacan$-compatible complex structures $\Jomega$. It is a standard result that $\Jomega$ has another natural interpretation, that of the negative Lagrangian Grassmannian $\Grass$. In this section, we outline this correspondence, and explain how it can be used to introduce a coordinate system on $\mathcal{O}_{\jSp(I)}$.

\subsection{The negative Lagrangian Grassmannian}
Consider the complexification $(\mathbb{R}^{2n})\otimes_{\mathbb{R}}\mathbb{C} \simeq \mathbb{C}^{2n}$ of $\mathbb{R}^{2n}$. Real-linear functions on $\mathbb{R}^{2n}$ extend to complex-linear functions on $\mathbb{C}^{2n}$ is the obvious way, and we will generally use the same symbol to denote both a function and its complex extension.

Define a sesquilinear form $s:\mathbb{C}^{2n}\times\mathbb{C}^{2n}\rightarrow \mathbb{C}$ by
\[
s(w,z) := -i\omegacan(w,\overline{z}).
\]

\begin{definition}
The \emph{negative Lagrangian Grassmannian} $\Grass$ is the set of (complex) Lagrangian subspaces of $\mathbb{C}^{2n}$ on which $s$ restricts to a negative-definite form,
\[
\Grass := \lbrace \Gamma\textrm{ a subspace of }\mathbb{C}^{2n}\,|\,\Gamma\textrm{ is Lagrangian}, s\vert_{\Gamma}<0 \rbrace.
\]
\end{definition}

\begin{proposition}[{\cite[Chapter II, Lemma 7.1]{Satake1980}}]\label{prop:JomegaGrass}
There is a bijection between $\Jomega$ and $\Grass$.
\end{proposition}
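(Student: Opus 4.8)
The plan is to build the bijection through the eigenspace decomposition of the complexified complex structure. Given $j\in\Jomega$, extend $j$ complex-linearly to $\mathbb{C}^{2n}$; since $j^2=-I$ its only eigenvalues are $\pm i$, each with complex multiplicity $n$. To $j$ I would associate its $(-i)$-eigenspace
\[
\Gamma_j := \lbrace w\in\mathbb{C}^{2n}\,|\, jw = -iw\rbrace,
\]
a subspace of complex dimension $n$. The claim is that $j\mapsto\Gamma_j$ is the desired map into $\Grass$, with inverse sending a negative Lagrangian $\Gamma$ to the unique real operator acting as $-iI$ on $\Gamma$ and $+iI$ on $\overline{\Gamma}$. (The $(-i)$- rather than $(+i)$-eigenspace is chosen precisely to match the \emph{negative} sign convention on $s$, as the computation below shows.)

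First I would check that $\Gamma_j\in\Grass$. Isotropy follows from the fact that $j$ is symplectic: for $w,z\in\Gamma_j$,
\[
\omegacan(w,z) = \omegacan(jw,jz) = \omegacan(-iw,-iz) = -\omegacan(w,z),
\]
so $\omegacan(w,z)=0$, and with $\dim_{\mathbb{C}}\Gamma_j = n$ this makes $\Gamma_j$ Lagrangian. For the sign of $s$, writing $w = u+iv$ with $u,v\in\mathbb{R}^{2n}$, the eigenvalue equation $jw=-iw$ forces $v = ju$, and a direct expansion gives $\omegacan(w,\overline{w}) = -2i\,\omegacan(u,v)$, whence
\[
s(w,w) = -i\,\omegacan(w,\overline{w}) = -2\,\omegacan(u,ju) = -2\,g_j(u,u).
\]
Since $g_j$ is positive-definite and $u\neq 0$ whenever $w\neq 0$, this yields $s\vert_{\Gamma_j}<0$, so $\Gamma_j\in\Grass$.

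For the inverse direction I would begin from $\Gamma\in\Grass$ and establish the splitting $\mathbb{C}^{2n}=\Gamma\oplus\overline{\Gamma}$: if $w\in\Gamma\cap\overline{\Gamma}$ then both $w$ and $\overline{w}$ lie in $\Gamma$, so $s(w,w)=-i\,\omegacan(w,\overline{w})=0$ by the Lagrangian condition, forcing $w=0$ by negative-definiteness; a dimension count then gives the direct sum. Defining $j$ to equal $-iI$ on $\Gamma$ and $+iI$ on $\overline{\Gamma}$ produces an operator commuting with complex conjugation, hence restricting to a real operator on $\mathbb{R}^{2n}$ with $j^2=-I$. I would then verify $j\in\Jomega$ by checking symplecticity on the pieces of the decomposition — the only nontrivial pairing is between $\Gamma$ and $\overline{\Gamma}$, where the factors $-i$ and $+i$ multiply to $1$ — and recovering $g_j>0$ from $s\vert_\Gamma<0$ by reading the displayed computation backwards. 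The two constructions are manifestly mutually inverse, since $\overline{\Gamma_j}$ is exactly the $(+i)$-eigenspace of $j$, while the $(-i)$-eigenspace of the reconstructed operator is $\Gamma$ by definition. The eigenspace bookkeeping is routine; the main obstacle is this inverse direction, namely confirming that the operator built from an abstract negative Lagrangian is a genuine $\omegacan$-compatible complex structure (both symplectic and with $g_j$ positive-definite), which is exactly where the Lagrangian and negativity hypotheses on $\Gamma$ must both be invoked.
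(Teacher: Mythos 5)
Your proposal is correct and follows essentially the same route as the paper's sketch: you associate to $j$ its $(-i)$-eigenspace (the paper writes this as $\frac{1}{2}(I+ij)\mathbb{R}^{2n}$, which is the same subspace), and you invert by splitting $\mathbb{C}^{2n}=\Gamma\oplus\overline{\Gamma}$ and declaring $j$ to be $\mp iI$ on the two summands. The only difference is that you explicitly carry out the verifications (isotropy, the sign computation $s(w,w)=-2g_j(u,u)$, symplecticity of the reconstructed operator) that the paper labels as straightforward.
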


\begin{proof} (Sketch)
For any $j\in\Jomega$, $j$ symplectic implies that it is diagonalisable, and $j^2=-I$ implies that its eigenvalues are $\pm i$. The $-i$-eigenspace is simply
\[
\Gamma^j = \frac{1}{2}(I +ij)\mathbb{R}^{2n} \subset \mathbb{C}^{2n}.
\]
It is straightforward to verify that $\Gamma^j\in \Grass$, and so we obtain a map
\[
\Gamma^\cdot:\Jomega \rightarrow \Grass.
\]

Conversely, suppose $\Gamma\in\Grass$. For any $w\in\Gamma-\lbrace 0 \rbrace$, the negative-definite condition implies that $-i\omega(w,\overline{w})<0$. Since $\Gamma$ is Lagrangian, it follows that $\overline{w}\nin \Gamma$. Hence $\Gamma\cap\overline{\Gamma} = \lbrace 0 \rbrace$, and $\mathbb{C}^{2n} = \Gamma \oplus \overline{\Gamma}$. Define the complex-linear map $j_{\Gamma}:\mathbb{C}^{2n}\rightarrow \mathbb{C}^{2n}$ by
\[
j_{\Gamma}(w) = 
\begin{cases} 
-iw & w\in \Gamma \\
iw & w\in\overline{\Gamma}
\end{cases}.
\]
Since it commutes with complex commutation, $j_\Gamma$ restricts to a real-linear map $j_\Gamma:\mathbb{R}^{2n}\rightarrow\mathbb{R}^{2n}$.
It is straightforward to verify that $j_\Gamma\in \Jomega$, and so we obtain a map
\[
j_\cdot: \Grass \rightarrow \Jomega.
\]

It may be checked that $\Gamma^\cdot:\Jomega \rightarrow \Grass$ and $j_\cdot:\Grass\rightarrow \Jomega$ are inverses of each other.
\end{proof}

Following Proposition \ref{prop:JomegaGrass}, we may extend the commutative diagram (\ref{diag:Spfirst}) to
\begin{equation}\label{diag:Spsecond}
\begin{diagram}
&\jO^{-1}(\jO(I)) = \Sptn \\
\ldTo^{\sigma_\mathrm{O}}(1,2)  && \rdTo^{\jSp}(1,2) \rdTo(4,2)^{\Gamma^{\jSp(\cdot)}}\\
N_{\jO(I)} & \rTo^{\cjSp} & \mathcal{O}_{\jSp(I)} = \Jomega\quad & \rTo^{\Gamma^\cdot} &&\Grass
\end{diagram}
\end{equation}

We note that for $S\in\Sptn$,
\[
\Gamma^{\Ad_S j} = \frac{1}{2}(I+i\,\Ad_S j)\mathbb{R}^{2n} = \frac{1}{2}(I+i\,\Ad_S j)S\mathbb{R}^{2n} =S\left(\frac{1}{2}(I+ij)\mathbb{R}^{2n} \right) = S\Gamma^j,
\]
where we have used the invertibility of $S$ to write $\mathbb{R}^{2n} = S\mathbb{R}^{2n}$ in the second equality. Hence $\Gamma^\cdot$ is $\Sptn$-equivariant with respect to the natural left action of $\Sptn$ on $\Grass$.

\subsection{Coordinates on \texorpdfstring{$\Grass$}{the Grassmannian}}
In order to introduce coordinates on $\Grass$, we first introduce the two complex Lagrangian subspaces of $\mathbb{C}^{2n}$
\[
\Gamma_1 = \mathrm{span}_{\mathbb{C}}\lbrace e_1, \ldots, e_n\rbrace \qquad\textrm{and}\qquad \Gamma_2 = \mathrm{span}_{\mathbb{C}}\lbrace e_{n+1},\ldots,e_{2n}\rbrace,
\]
where $\lbrace e_1,\ldots, e_{2n}\rbrace$ is the canonical basis of $\mathbb{R}^{2n}$. Note that $\mathbb{C}^{2n} = \Gamma_1\oplus\Gamma_2$, and denote the direct sum projections by $p_i:\mathbb{C}^{2n}\rightarrow \Gamma_i$. We use the following lemma.

\begin{lemma}\label{lemma:proj}
Let $\Gamma\in \Grass$. Then the projection $p_i:\mathbb{C}^{2n}\rightarrow \Gamma_i$ restricts to a bijection on $\Gamma$, for $i=1,2$.
\end{lemma}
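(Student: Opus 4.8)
The plan is to exploit a dimension count together with the strict sign condition defining $\Grass$. Since $\Gamma$ is Lagrangian in the $2n$-dimensional complex symplectic space $(\mathbb{C}^{2n},\omegacan)$, we have $\dim_{\mathbb{C}}\Gamma = n = \dim_{\mathbb{C}}\Gamma_i$. A complex-linear map between two complex vector spaces of equal finite dimension is bijective precisely when it is injective, so it suffices to show that each restriction $p_i\vert_\Gamma$ has trivial kernel. Since $\ker p_1 = \Gamma_2$ and $\ker p_2 = \Gamma_1$, this reduces to proving that $\Gamma\cap\Gamma_1 = \lbrace 0\rbrace$ and $\Gamma\cap\Gamma_2 = \lbrace 0\rbrace$.

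The key observation is that $\Gamma_1$ and $\Gamma_2$ are each spanned by \emph{real} vectors, hence invariant under complex conjugation, and are themselves complex Lagrangian subspaces of $\mathbb{C}^{2n}$; indeed $\omegacan(e_\alpha,e_\beta) = 0$ for $\alpha,\beta\in\lbrace 1,\ldots,n\rbrace$, and likewise for the indices $n+1,\ldots,2n$. I would then argue by contradiction: suppose $0\ne w\in\Gamma\cap\Gamma_1$. Conjugation-invariance of $\Gamma_1$ gives $\overline{w}\in\Gamma_1$, and the Lagrangian property of $\Gamma_1$ forces $\omegacan(w,\overline{w}) = 0$, whence $s(w,w) = -i\omegacan(w,\overline{w}) = 0$. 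But $w\in\Gamma$ and $w\ne 0$, so negative-definiteness of $s\vert_\Gamma$ requires $s(w,w)<0$, a contradiction. The identical argument with $\Gamma_2$ in place of $\Gamma_1$ disposes of the second intersection.

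The main (and essentially only) subtlety is recognising that the two ingredients---each $\Gamma_i$ being Lagrangian \emph{and} conjugation-invariant---conspire to make $s$ vanish identically on $\Gamma_i$, so that any nonzero vector lying in both $\Gamma$ and $\Gamma_i$ would violate the strict sign condition defining $\Grass$. Once this is seen, the remainder is the routine dimension count, and the symmetry between the cases $i=1$ and $i=2$ means both are handled by a single argument.
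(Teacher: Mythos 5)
Your proposal is correct and follows essentially the same route as the paper: a dimension count (all three subspaces are Lagrangian, hence $n$-dimensional) reduces the claim to injectivity, and a nonzero $w\in\Gamma\cap\Gamma_i$ is ruled out because conjugation-invariance and the Lagrangian property of $\Gamma_i$ force $s(w,w)=0$, contradicting $s\vert_\Gamma<0$. The only cosmetic difference is that you phrase the injectivity step via $\ker p_1=\Gamma_2$, $\ker p_2=\Gamma_1$ and treat both cases symmetrically, which the paper does implicitly.
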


\begin{proof}
Since all three are Lagrangian, $\Gamma, \Gamma_1, \Gamma_2$ all have complex dimension $n$. So we just need to prove that $p_i\vert_{\Gamma}:\Gamma\rightarrow\Gamma_i$ is injective.

Suppose $p_1(w)=0$ for $w\in\Gamma$. Then $w\in\Gamma_2$, and so $\overline{w}\in\overline{\Gamma}_2 = \Gamma_2$. Since $\Gamma_2$ is Lagrangian, we have that $-i\omegacan(w,\overline{w}) = 0$, i.e., $s(w,w) = 0$. But since $s\vert_\Gamma <0$, this implies that $w=0$.

The same argument shows that $p_2\vert_\Gamma:\Gamma\rightarrow \Gamma_2$ is a bijection.
\end{proof}

In light of Lemma \ref{lemma:proj}, for any $\Gamma\in\Grass$ we may define a map $\widehat{W}_\Gamma:\Gamma_1\rightarrow \Gamma_2$ by
\[
\widehat{W}_\Gamma = (p_2\vert_\Gamma)\circ(p_1\vert_\Gamma)^{-1}.
\]
Then for any $w\in\Gamma$, 
\[
w = (p_1+p_2)\vert_\Gamma(w) = (I+\widehat{W}_\Gamma)((p_1\vert_\Gamma)(w)),
\]
demonstrating that 
\begin{equation}\label{eqn:GammaW}
\Gamma = (I+\widehat{W}_\Gamma)\Gamma_1.
\end{equation}

It follows from identity (\ref{eqn:GammaW}) that the map $\Gamma\mapsto \widehat{W}_\Gamma$ is injective. We now use this representation of $\Grass$ as linear maps to introduce coordinates on $\Grass$. The following lemma is straightforward to verify.

\begin{lemma}[{\cite[Chapter II, Lemma 7.2']{Satake1980}}]\label{lemma:SiegelUHP}
Let $\widehat{W}:\Gamma_1\rightarrow \Gamma_2$ be a linear map, and define the matrix $W\in\mathrm{M}_n(\mathbb{C})$ by
\[
\widehat{W} (e_\alpha) = \sum_{\beta=1}^n W_{\beta\alpha}\, e_{n+\beta}\qquad \textrm{for }\alpha=1,\ldots,n.
\]
Then
\[
(I+\widehat{W})\Gamma_1 \in \Grass \iff W^\top = W \textrm{ and }\mathrm{Im}\,W>0.
\]
\end{lemma}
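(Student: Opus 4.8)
Let me understand what needs to be proven. We have a linear map $\widehat{W}:\Gamma_1\rightarrow\Gamma_2$ encoded by a matrix $W\in\mathrm{M}_n(\mathbb{C})$ via $\widehat{W}(e_\alpha) = \sum_\beta W_{\beta\alpha}e_{n+\beta}$. We need to show that the subspace $\Gamma = (I+\widehat{W})\Gamma_1$ lies in $\Grass$ if and only if $W$ is symmetric with positive-definite imaginary part.

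Recall $\Grass$ consists of complex Lagrangian subspaces of $\mathbb{C}^{2n}$ on which $s(w,z) = -i\omegacan(w,\bar z)$ is negative-definite. So there are two conditions: (1) $\Gamma$ is Lagrangian (i.e., $\omegacan$ vanishes on $\Gamma$, using the complex-bilinear extension of $\omegacan$), and (2) $s|_\Gamma < 0$.

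The plan is to translate each of these two conditions into a condition on $W$, working with an explicit spanning set for $\Gamma$. A natural basis for $\Gamma = (I+\widehat{W})\Gamma_1$ is given by the vectors
\[
w_\alpha = (I+\widehat{W})(e_\alpha) = e_\alpha + \sum_{\beta=1}^n W_{\beta\alpha}e_{n+\beta},\qquad \alpha = 1,\ldots,n.
\]
First I would compute $\omegacan(w_\alpha, w_\gamma)$ using the standard symplectic pairings $\omegacan(e_\alpha, e_{n+\beta}) = \delta_{\alpha\beta}$ (and the vanishing pairings among the $e_\alpha$'s and among the $e_{n+\beta}$'s, read off from $\mathbb{J}$). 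This is a short bilinear computation that should yield $\omegacan(w_\alpha, w_\gamma) = W_{\alpha\gamma} - W_{\gamma\alpha}$, so that the Lagrangian condition $\omegacan|_\Gamma = 0$ is equivalent to $W^\top = W$. That handles the first half.

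**The negative-definite condition.** Next I would compute the Hermitian matrix of $s$ restricted to $\Gamma$, namely the entries $s(w_\alpha, w_\gamma) = -i\,\omegacan(w_\alpha, \overline{w_\gamma})$. Here $\overline{w_\gamma} = e_\gamma + \sum_\beta \overline{W_{\beta\gamma}}\,e_{n+\beta}$, so after the same type of bilinear expansion I expect to find $s(w_\alpha, w_\gamma) = -i(W_{\alpha\gamma} - \overline{W_{\gamma\alpha}})$. Once symmetry $W^\top = W$ is imposed (which we may assume, since negative-definiteness is only relevant in the presence of the Lagrangian condition), this reduces to $-i(W_{\alpha\gamma} - \overline{W_{\alpha\gamma}}) = -i\cdot 2i\,\mathrm{Im}\,W_{\alpha\gamma} = 2\,\mathrm{Im}\,W_{\alpha\gamma}$. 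Thus the matrix of $s|_\Gamma$ in the basis $\{w_\alpha\}$ is $2\,\mathrm{Im}\,W$, and the condition $s|_\Gamma < 0$ becomes $\mathrm{Im}\,W < 0$.

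**The sign obstacle.** The step I expect to be the main (and in fact only) obstacle is a sign bookkeeping issue: the lemma as stated asserts equivalence with $\mathrm{Im}\,W > 0$, whereas the naive computation above suggests $\mathrm{Im}\,W < 0$. The discrepancy must be resolved by tracking the conventions carefully — in particular the exact form of the sesquilinear $s(w,z) = -i\omegacan(w,\bar z)$, the placement of the conjugate in $\overline{w_\gamma}$, and the orientation of $\mathbb{J}$. I would redo the $s(w_\alpha,w_\gamma)$ computation with scrupulous attention to which argument carries the complex conjugate and to the ordering inside $\omegacan$; the likely correct outcome is $s(w_\alpha,w_\gamma) = -2\,\mathrm{Im}\,W_{\alpha\gamma}$ (an extra sign from expanding $\omegacan(e_\alpha, e_{n+\beta})$ versus $\omegacan(e_{n+\beta}, e_\alpha)$ when the conjugated vector sits in the second slot), giving $s|_\Gamma < 0 \iff \mathrm{Im}\,W > 0$ as claimed. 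Since the lemma is asserted to be ``straightforward to verify,'' I would present the two bilinear computations compactly and flag the sign convention explicitly rather than belabour it.
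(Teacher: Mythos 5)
Your proposal is correct, and it is exactly the verification the paper has in mind: the paper gives no proof of this lemma, stating only that it is ``straightforward to verify'' and citing Satake, so your computation simply fills in the omitted details. For the record, the careful expansion gives $\omegacan(w_\alpha,w_\gamma)=W_{\alpha\gamma}-W_{\gamma\alpha}$ and $s(w_\alpha,w_\gamma)=-i\bigl(\overline{W_{\alpha\gamma}}-W_{\gamma\alpha}\bigr)$, which under $W^\top=W$ reduces to $-2\,\mathrm{Im}\,W_{\alpha\gamma}$; so the sign resolution you anticipated is indeed the correct one, the discrepancy in your first pass coming only from placing the complex conjugate on the wrong factor, and $s\vert_\Gamma<0\iff\mathrm{Im}\,W>0$ as claimed.
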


The set 
\[
\Sigma_n = \lbrace W\in\mathrm{M}_{n}(\mathbb{C}) \,|\, W^\top = W,\, \operatorname{Im}W>0 \rbrace
\]
arising in Lemma \ref{lemma:SiegelUHP} is traditionally referred to as the \emph{Siegel upper half plane}. 
It has a well-known symplectic structure. In subsection \ref{sec:sympstruct} we show this structure is the pushforward of the usual Kostant-Kirillov-Souriau symplectic structure on $\mathcal{O}_{\jSp(I)}$ under the bijection $\Gamma^\cdot$ of diagram (\ref{diag:Spsecond}).

\subsection{The action of \texorpdfstring{$\Sptn$}{Sp(2n,R)} in upper half plane coordinates}
There is a natural action of $\Sptn$ on $\Grass$, given by $(S,\Gamma) \mapsto S\Gamma$. We wish to express this action in terms of upper half plane coordinates, i.e., to express $W_{S\Gamma}$ in terms of $S$ and $W_\Gamma$.

Since $\Gamma_1 = \mathrm{span}_{\mathbb{C}}\lbrace e_1,\ldots, e_n\rbrace$, we see from equation (\ref{eqn:GammaW}) that a basis for $\Gamma\in\Grass$ is given by
\[
e_\alpha' := (I+\widehat{W}_\Gamma)e_\alpha = e_\alpha + \sum_{\beta=1}^n(W_\Gamma)_{\beta\alpha}e_{n+\beta},\qquad\alpha=1,\ldots, n.
\]
It follows that $S\Gamma$ has basis
\[
e_\alpha'' = Se_\alpha' = Se_\alpha + \sum_{\beta=1}^n(W_\Gamma)_{\beta\alpha}\, Se_{n+\beta},\qquad\alpha = 1,\ldots, n.
\]
Writing $S=\begin{bmatrix} A & B \\ C & D \end{bmatrix}$, this becomes
\[
e_\alpha'' = \sum_{\gamma=1}^n\left[ (A+BW_\Gamma)_{\gamma\alpha}e_\gamma + (C+DW_\Gamma)_{\gamma\alpha}e_{n+\gamma}\right],\qquad\alpha=1,\ldots,n.
\]

Now recall that the coordinates $W_{S\Gamma}$ of $S\Gamma$ are simply the matrix elements of $\widehat{W}_{S\Gamma} = (p_2\vert_{S\Gamma})\circ (p_1\vert_{S\Gamma})^{-1}:\Gamma_1\rightarrow \Gamma_2$. We have
\begin{align*}
p_1(e_\alpha'') &= \sum_{\gamma=1}^n (A+BW_\Gamma)_{\gamma\alpha}e_\gamma \\
\implies (p_1\vert_{S\Gamma})^{-1}(e_\delta) &= \sum_{\alpha=1}^n[(A+BW_\Gamma)^{-1}]_{\alpha\delta}\,e_\alpha'' \\
\implies  \widehat{W}_{S\Gamma} (e_\delta) &= (p_2\vert_{S\Gamma})\left((p_1\vert_{S\Gamma})^{-1}(e_\delta)\right) = \sum_{\alpha=1}^n[(A+BW_\Gamma)^{-1}]_{\alpha\delta}(p_2(e_\alpha''))\\
&= \sum_{\gamma=1}^n [(C+DW_\Gamma)(A+BW_\Gamma)^{-1}]_{\gamma\delta}\, e_{n+\gamma}.
\end{align*}

It follows that for $S=\begin{bmatrix} A & B \\ C & D\end{bmatrix}$, 
\begin{equation}\label{eqn:WSGamma}
W_{S\Gamma} = (C+DW_\Gamma)(A+BW_\Gamma)^{-1}.
\end{equation}

\subsection{The projection in upper half plane coordinates}
Referring again to diagram (\ref{diag:Spsecond}), we have demonstrated the existence of an $\Sptn$-equivariant map 
\[
\Gamma^{\jSp(\cdot)}:\jO^{-1}(\jO(I))\rightarrow \Grass.
\]
The set $\jO^{-1}(\jO(I))\subset\GLtn$ is simply $\Sptn$, and we have shown that the Siegel upper half plane provides a global coordinate chart on $\Grass$. Using the results of the previous section, we can derive an expression for the projection $\Gamma^{\jSp(\cdot)}$ in terms of the coordinates on its domain and range manifolds.

First note that $\Gamma^{\jSp(I)} = \frac{1}{2}(I+i\jSp(I))\mathbb{R}^{2n} = \frac{1}{2}(I-i\mathbb{J})\mathbb{R}^{2n}$ has a basis
\[
\lbrace e_\alpha + i e_{n+\alpha}\,|\,\alpha=1,\ldots n\rbrace.
\]
It follows easily from equation (\ref{eqn:GammaW}) that 
\[
W_{\Gamma^{\jSp(I)}} = iI\in\mathrm{M}_n(\mathbb{C}).
\]

By equivariance of $\Gamma^{\jSp(\cdot)}$, we have that $\Gamma^{\jSp(S)} = S\Gamma^{\jSp(I)}$ for any $S\in\jSp^{-1}(\jSp(I))$, and so
$W_{\Gamma^{\jSp(S)}} = W_{S\Gamma^{\jSp(I)}}$.
Writing $S = \begin{bmatrix}Q_1 & Q_2 \\ P_1 & P_2\end{bmatrix}$ for the coordinates on $\jSp^{-1}(\jSp(I))$, and using equation (\ref{eqn:WSGamma}), this becomes
\[
W_{\Gamma^{\jSp(S)}} = (P_1+P_2W_{\Gamma^{\jSp(I)}})(Q_1+Q_2W_{\Gamma^{\jSp(I)}})^{-1} = (P_1+iP_2)(Q_1+iQ_2)^{-1}.
\]
In summary, the projection written in (global) coordinates is
\begin{equation}\label{eqn:Gammacoords}
\Gamma^{\jSp(\cdot)}:\begin{bmatrix} Q_1 & Q_2 \\ P_1 & P_2 \end{bmatrix} \mapsto (P_1+iP_2)(Q_1+iQ_2)^{-1}.
\end{equation}

\subsection{Upper half plane coordinates \texorpdfstring{on $\mathcal{O}_{\jSp(I)}$}{the adjoint orbit}}\label{sec:sympstruct}
Using $\Gamma^\cdot:\mathcal{O}_{\jO(I)}\rightarrow \Grass$ to pull coordinates back to $\mathcal{O}_{\jO(I)}$, we may instead interpret (\ref{eqn:Gammacoords}) as the coordinate expression for the momentum map $\jSp:\jO^{-1}(\jO(I))\rightarrow \mathcal{O}_{\jSp(I)}$
\begin{equation}\label{eqn:jSpcoords}
\jSp:\begin{bmatrix} Q_1 & Q_2 \\ P_1 & P_2 \end{bmatrix} \mapsto (P_1+iP_2)(Q_1+iQ_2)^{-1}.
\end{equation}
This is in agreement with \cite[Equation (21)]{Ohsawa2015}, which was obtained by a method involving the Iwasawa decomposition of $\Sptn$.

For $\eta\in\mathcal{O}_{\jSp(I)}$, let $W_\eta\in\Sigma_n$ denote the corresponding coordinate. Then
equation (\ref{eqn:WSGamma}) becomes
\begin{equation}\label{eqn:WAdSeta}
W_{\Ad_S\eta} = (C+DW_\eta)(A+BW_\eta)^{-1}.
\end{equation}

It was shown by Siegel (see for example \cite[Chapter 6, Section 3]{Siegel1973}) that $\Sigma_n$ is a K\"ahler manifold, with Hermitian form $H_{\Sigma_n}=\operatorname{Tr}(\mathcal{B}^{-1}d\overline{W} \otimes \mathcal{B}^{-1}dW)$ and symplectic form $\Omega_{\Sigma_n} = \mathrm{Im}\, H^{\Sigma_n}$, i.e.,
\[
\Omega_{\Sigma_n} = \operatorname{Tr}(\mathcal{B}^{-1} d\mathcal{A}\otimes \mathcal{B}^{-1} d\mathcal{B} - \mathcal{B}^{-1}d\mathcal{B} \otimes \mathcal{B}^{-1} d\mathcal{A}).
\]
(We remind that $\mcA := \operatorname{Re} W,\, \mcB := \operatorname{Im} W$.)
Siegel also proves that the forms $H_{\Sigma_n}$, and so $\Omega_{\Sigma_n}$, are invariant under pullback by the $\Sptn$-action $W\mapsto (C+DW)(A+BW)^{-1}$. 

Correspondingly, the standard Kostant-Kirillov-Souriau symplectic form $\Omega^+_\mathcal{O}$ on the adjoint orbit $\mathcal{O}_{\jSp(I)} = \mathcal{O}_{-\mathbb{J}}$, given by
\[
(\Omega^+_\mathcal{O})_{\eta}(\ad_\xi \eta, \ad_\zeta\eta) = \llangle \eta, [\xi,\zeta]\rrangle,
\]
for $\eta\in\mathcal{O}_{-\mathbb{J}},\ \xi,\zeta\in\sptn$, is invariant under the adjoint $\Sptn$-action. 

Hence, if we can show that $\Omega_{\Sigma_n}$ and $\Omega^+_\mathcal{O}$ are proportional at $-\mathbb{J}$ (corresponding to coordinate $W=iI$), then they are proportional on the entire adjoint orbit $\mathcal{O}_{-\mathbb{J}}$. 

To see this is the case, let $\xi = \begin{bmatrix} \xi_{11} & \xi_{12} \\ \xi_{21} & \xi_{22} \end{bmatrix}\in\sptn$. From (\ref{eqn:WAdSeta}) we deduce
\[
\frac{d}{dt}\bigg\vert_{t=0} W_{\Ad_{\exp t\xi}(-\mathbb{J})} = (\xi_{21}+i\xi_{22}) - i(\xi_{11}+i\xi_{12}),
\]
implying that
\begin{equation}\label{eqn:coordrep}
\ad_\xi(-\mathbb{J}) = (\xi_{12}+\xi_{21})\frac{\partial}{\partial\mathcal{A}} \bigg\vert_{iI} + (\xi_{22}-\xi_{11})\frac{\partial}{\partial\mathcal{B}}\bigg\vert_{iI}.
\end{equation}
Note that $\xi^\top\mathbb{J} + \mathbb{J}\xi = 0 \implies \xi_{12}^\top = \xi_{12},\, \xi_{21}^\top = \xi_{21},\,\xi_{22}^\top = -\xi_{11}$, and so both coefficient matrices in \eqref{eqn:coordrep} are symmetric.

We have that
\begin{align}\label{eqn:OmegaO}
(\Omega^+_\mathcal{O})_{-\mathbb{J}} & (\ad_\xi(-\mathbb{J}), \ad_\zeta(-\mathbb{J})) = \llangle -\mathbb{J}, [\xi,\zeta]\rrangle \nonumber \\
&\qquad= \frac{1}{2}\operatorname{Tr}\left( -(\xi_{21}\zeta_{11} + \xi_{22}\zeta_{21}) + (\xi_{11}\zeta_{12} + \xi_{12}\zeta_{22}) - (\xi\leftrightarrow\zeta) \right).
\end{align}
Meanwhile, at $W=iI$ the symplectic form on $\Sigma_n$ simplifies to
\[
(\Omega_{\Sigma_n})_{iI} = \operatorname{Tr}(d\mathcal{A}\otimes d\mathcal{B} - d\mathcal{B}\otimes d\mathcal{A})_{iI},
\]
and so (\ref{eqn:coordrep}) gives
\[
(\Omega_{\Sigma_n})_{iI}(\ad_\xi(-\mathbb{J}), \ad_\zeta(-\mathbb{J})) = \operatorname{Tr}((\xi_{12}+\xi_{21})(\zeta_{22}-\zeta_{11}) - (\xi_{22}-\xi_{11})(\zeta_{12}+\zeta_{21})).
\]
Comparison with (\ref{eqn:OmegaO}) shows that $(\Omega^+_\mathcal{O})_{-\mathbb{J}} = \frac{1}{2}(\Omega_{\Sigma_n})_{iI}$, and therefore
\[
\Omega^+_\mathcal{O} = \frac{1}{2}\Omega_{\Sigma_n}.
\]

\section{Dynamics}\label{sec:dynamics}

In this section we finally come to the main point of the previous constructions, namely to describe the dynamics of semiclassical Gaussian wave packets in terms of Hamiltonian dynamics on the frame bundle of $T^*\mathbb{R}^n$, and its symplectic reduction.

\subsection{The Gaussian wave packet ansatz}
Consider the time-dependent Schr\"odinger equation on $\mathbb{R}^n$,
\begin{equation}\label{eqn:TDSE}
i\hbar\frac{\partial\psi}{\partial t}(x,t) = -\frac{\hbar^2}{2m}\Delta\psi(x,t) + V(x)\,\psi(x,t),
\end{equation}
where the potential $V(x)$ is at most \emph{quadratic} in $x$.
Following \cite{Lubich2008}, we recall here the Gaussian wave packet ansatz, in both the Heller \cite{Heller1976} and Hagedorn \cite{Hagedorn1980} parametrisations. 

In Heller's parametrisation, the wavefunction is written
\begin{equation}\label{eqn:Helleransatz}
\begin{split}
\psi(x,t) = \left(\frac{\det \mcB(t)}{(\pi\hbar)^n} \right)^{\frac{1}{4}} \exp \bigg\lbrace \frac{i}{\hbar} \bigg[&\frac{1}{2}(x-q(t))^\top (\mcA(t)+i\mcB(t))(x-q(t)) \\
&+ p(t)^\top(x-q(t)) + \phi(t) \bigg] \bigg\rbrace,
\end{split}
\end{equation}
with $(q,p)\in T^*\mathbb{R}^n\simeq \mathbb{R}^{2n}$, $\phi$ a real phase, and $\mcA+i\mcB\in \Sigma_n$ with $\mcA,\mcB$ real.
Substitution of the ansatz into equation (\ref{eqn:TDSE}) shows that it produces a solution provided the following equations are satisfied
\begin{align}\label{eqn:Hellereqns}
\dot{q} &= \frac{p}{m}, \qquad \dot{p} = -D_q V, \nonumber \\
\dot{\mcA} &= -\frac{1}{m}(\mathcal{A}^2 - \mathcal{B}^2) - D^2_q V, \qquad \dot{\mcB}= -\frac{1}{m}(\mcA \mcB + \mcB \mcA),\\
\dot{\phi} &= \frac{p^2}{2m} - V(q) - \frac{\hbar}{2m}\operatorname{Tr}\mcB \nonumber.
\end{align}
Here, $D^2V$ denotes the Hessian of $V$.

In Hagedorn's parametrisation, the wavefunction is
\begin{equation}\label{eqn:Hagedornansatz}
\begin{split}
\psi(x,t) =  \frac{1}{(\pi\hbar)^{\frac{n}{4}}} \frac{1}{(\det Q(t))^{\frac{1}{2}}} \exp \bigg\lbrace \frac{i}{\hbar} \bigg[&\frac{1}{2}(x-q(t))^\top P(t)Q(t)^{-1}(x-q(t)) \\
&+ p(t)^\top(x-q(t)) + S(t) \bigg] \bigg\rbrace,
\end{split}
\end{equation}
for some appropriate choice of the square root in the denominator, with $(q,p)\in\mathbb{R}^{2n}$, $S$ a real phase, and $Q, P\in\mathrm{M}_n(\mathbb{C})$ satisfying 
\[
Q^\top P - P^\top Q = 0 \qquad\textrm{and}\qquad Q^\dagger P - P^\dagger Q = 2iI_n.
\]
Substitution into equation (\ref{eqn:TDSE}) yields
\begin{equation}\label{eqn:Hagedorneqns}
\dot{q} = \frac{p}{m}, \qquad\dot{p} = -D_q V, \qquad \dot{Q} = \frac{P}{m}, \qquad \dot{P} = -D^2_q V\,Q, \qquad \dot{S} = \frac{p^2}{2m} - V(q). 
\end{equation}

The two parametrisations (\ref{eqn:Helleransatz}) and (\ref{eqn:Hagedornansatz}) are related by
\begin{equation}\label{eqn:coordrelation}
W = PQ^{-1}, \qquad S = \phi - \frac{\hbar}{2}\mathrm{arg}(\det Q).
\end{equation}

\subsection{Description as Hamiltonian dynamics on the frame bundle}
Given a symplectic manifold $(M,\omega)$, recall that the a \emph{frame} at $z\in M$ is an ordered basis of $T_z M$, and the \emph{frame bundle} $\mathcal{F}(M)$ consists of all such ordered bases as $z$ ranges over $M$. $\mathcal{F}(M)$ is naturally a right principal $\GLtn$-bundle, with group action
\begin{equation}\label{eqn:GLnaction}
(v_1,\ldots, v_{2n})\cdot E = \left(\sum_{a=1}^{2n} v_a E_{a1}, \sum_{a=1}^{2n} v_a E_{a2}, \ldots, \sum_{a=1}^{2n} v_a E_{a,2n} \right). 
\end{equation}
We say that a frame $(v_a)$ at $z\in M$ is symplectic if
\[
\omega_z(v_a,v_b) = \mathbb{J}_{ab}\qquad a,b=1,\ldots,2n,
\]
and a local section $\bm{s}:U\subset M\rightarrow \mathcal{F}(M)$ of the frame bundle is symplectic if $\bm{s}(z)$ is a symplectic frame for each $z\in U$.

Consider now the case when $M=T^*\mathbb{R}^n$ and $\omega = \sum_{\alpha=1}^n dq^\alpha\wedge dp_\alpha$ is the usual cotangent bundle symplectic form. We will write the natural coordinates on $T^*\mathbb{R}^n$ as either $(z_1,\ldots, z_{2n})$ or $(q^1,\ldots, q^n, p_1,\ldots, p_n)$, depending on situation. The bundle $\mathcal{F}(T^*\mathbb{R}^n)$ has a \emph{global} symplectic frame $\bm{s}:T^*\mathbb{R}^n\rightarrow \mathcal{F}(T^*\mathbb{R}^n)$, given by
\[
\bm{s}(z) := \left(\frac{\partial}{\partial q^1}\bigg\vert_z, \ldots, \frac{\partial}{\partial q^n}\bigg\vert_z,\frac{\partial}{\partial p_1}\bigg\vert_z, \ldots,\frac{\partial}{\partial p_n}\bigg\vert_z\right) = \left(\frac{\partial}{\partial z_1}\bigg\vert_z, \ldots, \frac{\partial}{\partial z_{2n}}\bigg\vert_z\right),
\]
which induces a global trivialisation $\Lambda:T^*\mathbb{R}^n \times \GLtn \rightarrow \mathcal{F}(T^*\mathbb{R}^n)$ of the frame bundle, given by
\[
\Lambda(z,E) = \bm{s}(z)\cdot E = \left(\sum_{a=1}^{2n}E_{a1}\frac{\partial}{\partial z_a}\bigg\vert_z, \ldots, \sum_{a=1}^{2n}E_{a,2n}\frac{\partial}{\partial z_a}\bigg\vert_z \right).
\]

We define the \emph{$\hbar$-diagonal lifted} symplectic form $\Omega^{\hbar}$ on $T^*\mathbb{R}^n\times\GLtn$ by
\begin{equation}\label{eqn:Omegah}
\Omega^{\hbar}_{(z,E)}((v_z, V_E), (w_z,W_E)) := \omega_z(v_z,w_z) + \frac{\hbar}{2}\Omega_E(V_E,W_E).
\end{equation}
By analogy with \cite{Ohsawa2015}, for any Hamiltonian $H:T^*\mathbb{R}^n\rightarrow \mathbb{R}$ we define the lifted Hamiltonian on the frame bundle $H^{\hbar}:T^*\mathbb{R}\times\GLtn \rightarrow \mathbb{R}$ to be
\begin{equation}\label{eqn:liftHam1}
H^{\hbar}(z,E) := H(z) + \frac{\hbar}{4}\operatorname{Tr}(E^\top D^2_z H E),
\end{equation}
where $D^2_zH$ denotes the Hessian of $H$ evaluated at $z$. 

We can give a more geometric interpretation to (\ref{eqn:liftHam1}) as follows: for any $E\in \GLtn$, let $g(E)$ denote the metric on $T^*\mathbb{R}^n$ with respect to which the global frame $\Lambda(z,E) = \bm{s}(z)\cdot E$ is orthonormal, i.e., the vector fields
\[
e_a(z) = \sum_{b=1}^{2n}E_{ba}\frac{\partial}{\partial z_b},\qquad a=1,\ldots,2n,
\]
form an orthonormal basis at each point. Explicitly
\begin{equation}\label{eqn:unreducedmetric}
g(E)\left(\frac{\partial}{\partial z_a}, \frac{\partial}{\partial z_b}\right) = [(EE^\top)^{-1}]_{ab}.
\end{equation}
Then the Laplacian $\Delta_{g(E)}:C^\infty(T^*\mathbb{R}^n)\rightarrow C^\infty(T^*\mathbb{R}^n)$ associated with the metric $g(E)$ is given for $H\in C^\infty(T^*\mathbb{R}^n)$ by
\[
\Delta_{g(E)}H = \sum_{a=1}^{2n} e_a (e_a H) = \sum_{a,b,c=1}^{2n}E_{ba}E_{ca}\frac{\partial^2 H}{\partial z_b\partial z_c} = \operatorname{Tr}(E^\top D^2H E),
\]
and so $(\ref{eqn:liftHam1})$ can be alternatively written as
\begin{equation}\label{eqn:liftHam2}
H^{\hbar}(z,E) = H(z) + \frac{\hbar}{4}(\Delta_{g(E)}H)(z).
\end{equation}

To find the Hamiltonian vector field corresponding to $H^{\hbar}$, note that
\[
d_{(z,E)}H^{\hbar} = (d_1)_z \left( H + \frac{\hbar}{4} \Delta_{g(E)}H\right) + (d_2)_E \left(\frac{\hbar}{4}\Delta_{g(\cdot)}H(z)\right),
\]
where $d_1, d_2$ denote the exterior derivatives in $T^*\mathbb{R}^n$ and $\GLtn$ respectively.
The first term can be written
\[
i_{X_{H+ \frac{\hbar}{4}\Delta_{g(E)}H}}\omega.
\]
The second term satisfies
\begin{align*}
(d_2)_E\left(\frac{\hbar}{4}\Delta_{g(\cdot)}H(z)\right) (V_E) &= \frac{\hbar}{4} \operatorname{Tr}\left(V^\top D^2_zH\,E + E^\top D^2_zH\, V \right) \\
&= \frac{\hbar}{2}\operatorname{Tr}\left( E^\top D^2_zH\,V \right)\\
&= \frac{\hbar}{2}\Omega_E((\mathbb{J}D^2_z H)\cdot E, V_E) \\
&= \frac{\hbar}{2}\left( i_{(\mathbb{J}D^2_zH)\cdot E}\Omega_E \right)(V_E).
\end{align*}
Here we have used the identity $\zeta\cdot E = (\zeta E)_E$. Overall
\[
d_{(z,E)} H^{\hbar} = i_{(X_{H+\frac{\hbar}{4}\Delta_{g(E)}H}(z), (\mathbb{J}D^2_ZH)\cdot E)}\Omega^{\hbar}_{(z,E)},
\]
i.e.,
\begin{equation}\label{eqn:XHhbar}
X_{H^{\hbar}}(z,E) = \left( X_{H+\frac{\hbar}{4}\Delta_{g(E)}H}(z), (\mathbb{J}D^2_zH)\cdot E \right).
\end{equation}

In the situation where $H:T^*\mathbb{R}^n\rightarrow \mathbb{R}$ is at most quadratic in the coordinates $(q^1,\ldots, q^n,p_1,\ldots, p_n)$, the Hessian $D^2_zH$, and hence $\Delta_{g(E)}H$, will be constant, and the Hamiltonian vector field $X_{H^{\hbar}}$ simplifies to
\[
X_{H^{\hbar}}(z,E) = \left( X_H(z), (\mathbb{J}D^2_zH)\cdot E \right) = \left( \mathbb{J}D_zH, (\mathbb{J}D^2_zH)\cdot E \right) 
\]
In this case, motion on the frame bundle consists of classical motion on the base space $T^*\mathbb{R}^n$, and the corresponding induced linearised motion on the frame. This picture of semiclassical wave mechanics has much in common with the \emph{nearby orbit approximation} discussed by Littlejohn \cite[Section 7]{Littlejohn1986}\textemdash semiclassical motion consists of the classical motion, plus motion of a frame moving along the classical trajectory, and describing to first order the classical flow relative to it.

In particular, taking the standard Hamiltonian
\[
H(q,p) = \frac{p^2}{2m} + V(q)
\]
and writing $E = \begin{bmatrix} Q_1 & Q_2 \\ P_1 & P_2 \end{bmatrix}$
yields the first four equations of the Hagedorn equations (\ref{eqn:Hagedorneqns}).

In the general case (\ref{eqn:XHhbar}) where $H$ is not quadratic, there is an additional correction term in the motion on the base space, leading to a deviation from strictly classical motion. This deviation has recently been proposed as a means of introducing quantum tunnelling into semiclassical quantum mechanics \cite{OhsawaLeok2013}.

\subsection{\texorpdfstring{$\Otn$}{O(2n)}-invariance and conservation of symplectic frame}
The right $\GLtn$-action (\ref{eqn:GLnaction}) on the bundle $\mathcal{F}(T^*\mathbb{R}^n) \simeq T^*\mathbb{R}^n\times\GLtn$ restricts to a right $\Otn$-action, and the lifted Hamiltonian $H^{\hbar}$ is easily seen to be invariant under this restricted action. By Noether's theorem, the corresponding momentum map $\jO\circ \pi_2:T^*\mathbb{R}^n\times \GLtn\rightarrow \otn$ is conserved under the dynamical evolution corresponding to $H^{\hbar}$, i.e.,
\[
\jO(E) = -E^\top \mathbb{J} E
\]
is conserved. In particular, the previously considered level set
\[
\jO^{-1}(\jO(I)) = \lbrace E\in \GLtn\,|\,-E^\top\mathbb{J}E = -\mathbb{J} \rbrace
\]
simply corresponds to the collection of \emph{symplectic} frames at a point, and conservation of $\jO$ says that such a frame remains symplectic under dynamical evolution.

\subsection{Description in the reduced space}
Since we have a Hamiltonian $H^{\hbar}$ that is invariant under the right $\Otn$-action, the dynamics on $\mathcal{F}(T^*\mathbb{R}^n)$ will drop to the symplectic quotient. Taking the symplectic quotient at momentum $\jO(I) = -\mathbb{J}$, and using the notation of diagram (\ref{diag:Spsecond}), gives
\[
\frac{(\jO\circ\pi_2)^{-1}(\jO(I))}{\Otn_{\jO(I)}} = T^*\mathbb{R}^n \times \frac{\jO^{-1}(\jO(I))}{\Otn_{\jO(I)}} = T^*\mathbb{R}^n\times N_{\jO(I)},
\]
and we have the corresponding extension of diagram (\ref{diag:Spfirst})
\begin{diagram}
&&T^*\mathbb{R}^n\times\jO^{-1}(\jO(I)) \\
&\ldTo^{\operatorname{id}\times\sigma_\mathrm{O}} && \rdTo^{\operatorname{id}\times\jSp}\\
T^*\mathbb{R}^n\times N_{\jO(I)} && \rTo^{\operatorname{id}\times\cjSp} && T^*\mathbb{R}^n\times \mathcal{O}_{\jSp(I)}.
\end{diagram}
The map $\operatorname{id}\times\cjSp$ is a symplectomorphism, and so we can equivalently describe the reduced dynamics on $T^*\mathbb{R}^n\times\mathcal{O}_{\jSp(I)}$. Rewriting equation (\ref{eqn:liftHam1}) as
\[
H^{\hbar}(z,E) = H(z) + \frac{\hbar}{4}\operatorname{Tr}(-EE^\top \mathbb{J}^2 D^2_z H) = H(z) + \frac{\hbar}{4}\operatorname{Tr}(\jSp(E)\,\mathbb{J}D^2_zH)
\]
we see that 
\[
H^{\hbar} = \check{H}^{\hbar}\circ(\operatorname{id}\times\jSp), 
\]
where $\check{H}^{\hbar}: T^*\mathbb{R}^n\times\mathcal{O}_{\jSp(I)}\rightarrow \mathbb{R}$ is the reduced Hamiltonian
\begin{equation}\label{eqn:redliftHam1}
\check{H}^{\hbar}(z,\zeta) := H(z) + \frac{\hbar}{4}\operatorname{Tr}(\zeta\, \mathbb{J}D^2_zH).
\end{equation}
In fact, from the expression (\ref{eqn:unreducedmetric}) for the metric $g(E)$, we see that 
\[
g(E)\left(\frac{\partial}{\partial z_a},\frac{\partial}{\partial z_b}\right) = [(\jSp(E)\mathbb{J})^{-1}]_{ab},
\]
and so the metric $g(E)$ drops to a metric $\check{g}(\zeta)$ on $T^*\mathbb{R}^n$, depending on $\zeta\in\sptn$,
\[
\check{g}(\zeta)\left(\frac{\partial}{\partial z_a},\frac{\partial}{\partial z_b} \right) = [(\zeta\mathbb{J})^{-1}]_{ab}.
\]
Then the reduced Hamiltonian \eqref{eqn:redliftHam1} can be written, analogously to \eqref{eqn:liftHam2}, as
\begin{equation}\label{eqn:redliftHam2}
\check{H}^{\hbar}(z,\zeta) := H(z) + \frac{\hbar}{4} (\Delta_{\check{g}(\zeta)}H)(z).
\end{equation}

Since $H^{\hbar} = \check{H}^{\hbar}\circ(\operatorname{id}\times \jSp)$, and $\operatorname{id}\times\jSp$ is a Poisson map, the Hamiltonian vector fields 
\[
X_{H^{\hbar}}(z,E) = \left(X_{H+ \frac{\hbar}{4}\Delta_{g(E)} H}(z), (\mathbb{J} D^2_zH)\cdot E \right)
\]
and 
\[
X_{\check{H}^{\hbar}}(z,\zeta) = \left(X_{H+\frac{\hbar}{4}\Delta_{\hat{g}(\zeta)}H}(z), \ad_{\mathbb{J} D^2_z H} \zeta \right)
\] 
are $(\operatorname{id} \times \jSp)$-related,
\[
T(\operatorname{id}\times \jSp)\circ X_{H^{\hbar}} = X_{\check{H}^{\hbar}}\circ(\operatorname{id}\times\jSp).
\]
Again specialising to quadratic Hamiltonians, these $(\operatorname{id}\times\jSp)$-related vector fields are
\[
X_{H^{\hbar}}(z,E) = (X_H(z), (\mathbb{J}D^2_zH)\cdot E)
\]
and
\[
X_{\check{H}^{\hbar}}(z,\zeta) = (X_H(z), \ad_{\mathbb{J}D^2_zH}\zeta).
\]


The unreduced dynamics, generated by $X_{H^{\hbar}}$ and reproducing the the Hagedorn equations (\ref{eqn:Hagedorneqns}), will drop to the reduced dynamics generated by $X_{\check{H}^{\hbar}}$, and these will reproduce the Heller equations (\ref{eqn:Hellereqns}), by virtue of the relationship \eqref{eqn:jSpcoords} expressing $\jSp$ in coordinates on $\jO^{-1}(\jO(I))$ and $\mathcal{O}_{\jSp(I)}$.

\section{Generalisation to other symplectic manifolds}\label{sec:deLeon}
Now that we have outlined the construction of a Hamiltonian system on the frame bundle of $T^*\mathbb{R}^n$, an obvious question is whether this construction can be extended to more general symplectic manifolds. In this section we suggest a generalisation by employing the results of Cordero and de Le\'on \cite{CorderodeLeon1983}. Essentially they show that the frame bundle can be provided with a symplectic structure if it has a trivialisation compatible with the symplectic structure, encoded in the form of a symplectic connection. Their construction may be seen as the symplectic analogue of the Sasaki-Mok metric on the frame bundle of a Riemannian manifold \cite{Mok1978}.

\subsection{The almost-symplectic form on the frame bundle}
Let $(M,\omega)$ be an arbitrary $2n$-dimensional symplectic manifold, and let $\pi:\mathcal{F}(M)\rightarrow M$ denote its frame bundle, which is a principal right $\GLtn$-bundle in the usual way. For a frame $\bm{e} = (e_1,\ldots, e_{2n}) \in\mathcal{F}(M)$, and $\xi\in\gltn$, define
\[
\bm{e}\cdot\xi := \frac{d}{ds}\bigg\vert_{s=0} \bm{e}\cdot \exp(s\xi) \in V_{\bm{e}}\mathcal{F}(M) \subset T_{\bm{e}}\mathcal{F}(M),
\]
where $V_{\bm{e}}$ denotes the vertical space at $\bm{e}\in\mathcal{F}(M)$. Given a connection 1-form $A\in\Omega^2(\mathcal{F}(M),\gltn)$, Cordero and de Le\'on \cite{CorderodeLeon1983} define two natural lifts of the symplectic 2-form $\omega$ to the frame bundle $\mathcal{F}(M)$:
\begin{itemize}
\item
the vertical lift $\omega^{\mathrm{ver}} = \pi^*\omega$;
\item
the horizontal lift $\omega^{\mathrm{hor}}$, given by
\[
\omega^{\mathrm{hor}}_{\bm{e}} (X_{\bm{e}}, Y_{\bm{e}}) := \left(\bigoplus_{a=1}^{2n} \omega \right)_{\bm{e}} (\bm{e}\cdot A_{\bm{e}}(X_{\bm{e}}), \bm{e}\cdot A_{\bm{e}}(X_{\bm{e}})).
\]
\end{itemize}
In the definition of the horizontal lift we have used the canonical isomorphism $\oplus_{a=1}^{2n}T_z M \simeq V_{\bm{e}}\mathcal{F}(M)$ (for $z=\pi(\bm{e}))$ given by
\[
(v_1, \ldots, v_{2n}) \in \bigoplus_{a=1}^{2n} T_z M \longleftrightarrow \frac{d}{ds}\Bigg\vert_{s=0}(e_1 + sv_1,\ldots e_{2n}+sv_{2n})\in V_{\bm{e}}\mathcal{F}(M),
\]
and here $\bm{e}\cdot A_{\bm{e}}(X_{\bm{e}})$ is the infinitesimal generator at $\bm{e} \in \mathcal{F}(M)$ corresponding to $A_{\bm{e}}(X_{\bm{e}})\in\gltn$.
We then define the \emph{$\hbar$-diagonal lift} $\omega^{\hbar}$ of the symplectic form $\omega$ by
\[
\omega^{\hbar} := \omega^{\mathrm{ver}} + \frac{\hbar}{2}\omega^{\mathrm{hor}}.
\]
This is essentially the diagonal lift defined in \cite{CorderodeLeon1983}, up to a factor of $\frac{\hbar}{2}$. 

Denote the horizontal lift of a vector field $X\in\mathfrak{X}(M)$ to $\mathcal{F}(M)$ by $X^{\mathrm{hor}}$, and the infinitesimal generator on $\mathcal{F}(M)$ corresponding to $\xi\in\gltn$ by $\xi_{\mathcal{F}(M)}$. Then it is straightforward to verify
\begin{itemize}
\item
$\omega^{\hbar}(X^{\mathrm{hor}}, Y^{\mathrm{hor}}) = \omega(X,Y)\circ\pi$,
\item
$\omega^{\hbar}(Z^{\mathrm{hor}},\xi_{\mathcal{F}(M)}) = 0$, and
\item
$\omega^{\hbar}(\xi_{\mathcal{F}(M))}, \zeta_{\mathcal{F}(M)})(\bm{e}) = \frac{\hbar}{2}\sum_{a=1}^{2n}\omega_z((\bm{e}\cdot \xi)_a, (\bm{e}\cdot\zeta)_a) = \frac{\hbar}{2}\sum_{a,b,c=1}^{2n}\omega_z(e_b,e_c)\,\xi_{ba}\zeta_{ca}$.
\end{itemize}
Using the non-degeneracy of $\omega$ to infer that $\omega_z(e_b,e_c)$ is an invertible matrix in the last identity, it follows that the 2-form $\omega^{\hbar}$ is non-degenerate, and so defines an almost-symplectic form on $\mathcal{F}(M)$.

\subsection{Conditions for the lifted form to be closed}
In order for the almost-symplectic form $\omega^{\hbar}$ to be a symplectic form, it must satisfy the additional condition $d\omega^{\hbar} = 0$. The conditions for this to be the case are computed in \cite{CorderodeLeon1983}.

\begin{proposition}[{\cite[Proposition 6.1]{CorderodeLeon1983}}]
\leavevmode
\begin{enumerate}[(i)]
\item
$d\omega^{\hbar}(X^{\mathrm{hor}}, Y^{\mathrm{hor}}, Z^{\mathrm{hor}}) = (d\omega(X,Y,Z))\circ\pi$;
\item
$d\omega^{\hbar}(X^{\mathrm{hor}}, Y^{\mathrm{hor}}, \xi_{\mathcal{F}(M)}) = \frac{\hbar}{2} \omega^{\mathrm{ver}}(F^A(X^{\mathrm{hor}}, Y^{\mathrm{hor}})_{\mathcal{F}(M)}, \xi_{\mathcal{F}(M)})$;
\item
$d\omega^{\hbar}(X^{\mathrm{hor}},\xi_{\mathcal{F}(M)}, \zeta_{\mathcal{F}(M)}) = \frac{\hbar}{2}(\nabla^A_X\omega)^{\mathrm{ver}}(\xi_{\mathcal{F}(M)},\zeta_{\mathcal{F}(M)})$;
\item
$d\omega^{\hbar}(\xi_{\mathcal{F}(M)}, \zeta_{\mathcal{F}(M)},\chi_{\mathcal{F}(M)}) = 0$.
\end{enumerate}
Here, $F^A\in\Omega^2(\mathcal{F}(M),\gltn)$ denotes the curvature 2-form of $A$, and $\nabla^A$ denotes the covariant derivative associated with $A$.
\end{proposition}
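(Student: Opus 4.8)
The plan is to evaluate $d\omega^{\hbar}$ on triples of vector fields drawn from horizontal lifts $X^{\mathrm{hor}}$ and fundamental vector fields $\xi_{\mathcal{F}(M)}$ (with $\xi\in\gltn$ constant), using the invariant formula for the exterior derivative of a $2$-form,
\[
d\alpha(U,V,W) = U\,\alpha(V,W) - V\,\alpha(U,W) + W\,\alpha(U,V) - \alpha([U,V],W) + \alpha([U,W],V) - \alpha([V,W],U),
\]
together with the three pairing identities for $\omega^{\hbar}$ recorded just before the proposition. Since horizontal lifts and fundamental fields span $T\mathcal{F}(M)$ and each is purely horizontal or purely vertical, verifying the four cases (i)--(iv) establishes the result.

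First I would assemble the bracket relations that feed into the formula. The Cartan structure equation gives $A([X^{\mathrm{hor}},Y^{\mathrm{hor}}]) = -F^A(X^{\mathrm{hor}},Y^{\mathrm{hor}})$, so $[X^{\mathrm{hor}},Y^{\mathrm{hor}}]$ has horizontal part $[X,Y]^{\mathrm{hor}}$ and vertical part $-F^A(X^{\mathrm{hor}},Y^{\mathrm{hor}})_{\mathcal{F}(M)}$. Because the connection is equivariant, each horizontal lift is invariant under the right $\GLtn$-action, whose flow is generated by the fundamental fields; hence $[X^{\mathrm{hor}},\xi_{\mathcal{F}(M)}]=0$ for constant $\xi$. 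Finally $[\xi_{\mathcal{F}(M)},\zeta_{\mathcal{F}(M)}] = [\xi,\zeta]_{\mathcal{F}(M)}$, the action being on the right.

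With these in hand the four cases are short. In case (i) each pairing $\omega^{\hbar}(Y^{\mathrm{hor}},Z^{\mathrm{hor}})=\omega(Y,Z)\circ\pi$ differentiates along the projection, and in the bracket terms only the horizontal part $[X,Y]^{\mathrm{hor}}$ survives against $Z^{\mathrm{hor}}$ (the vertical curvature part is killed by the horizontal--vertical orthogonality identity), so the six terms reassemble into $(d\omega(X,Y,Z))\circ\pi$. In case (ii) the first three and the last two terms vanish (using $\omega^{\hbar}(X^{\mathrm{hor}},\xi_{\mathcal{F}(M)})=0$, the fact that $\omega(X,Y)\circ\pi$ is constant along fibres, and $[X^{\mathrm{hor}},\xi_{\mathcal{F}(M)}]=0$), leaving only $-\omega^{\hbar}([X^{\mathrm{hor}},Y^{\mathrm{hor}}],\xi_{\mathcal{F}(M)})$; inserting the vertical part $-F^A(\cdot,\cdot)_{\mathcal{F}(M)}$ and the vertical pairing identity reproduces the stated curvature term. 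In case (iii) the same vanishings leave only $X^{\mathrm{hor}}\omega^{\hbar}(\xi_{\mathcal{F}(M)},\zeta_{\mathcal{F}(M)})$; writing the fibre pairing as $\frac{\hbar}{2}\sum_{a,b,c}\omega_z(e_b,e_c)\xi_{ba}\zeta_{ca}$ and differentiating horizontally, one uses that the horizontal lift parallel-transports the frame to identify the derivative with $\frac{\hbar}{2}(\nabla^A_X\omega)^{\mathrm{ver}}(\xi_{\mathcal{F}(M)},\zeta_{\mathcal{F}(M)})$. For case (iv) I would bypass the Lie-algebra bookkeeping: restricting $\omega^{\hbar}$ to a single fibre and identifying frames with matrices $E\in\GLtn$ relative to a reference frame $\bm{e}_0$ with Gram matrix $G_{bc}=\omega_z((\bm{e}_0)_b,(\bm{e}_0)_c)$, the fibre form becomes the constant-coefficient expression $\frac{\hbar}{2}\operatorname{Tr}(W^\top G V)$ in the linear matrix coordinate, of exactly the shape of the form $\Omega$ of Section~\ref{sec:sympMn} with $\mathbb{J}$ replaced by $G$, hence closed; as the fundamental fields are tangent to the fibre and $d$ commutes with pullback to it, $d\omega^{\hbar}$ vanishes on vertical triples.

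The main obstacle is the two identifications in (ii) and (iii): recognising the vertical component of $[X^{\mathrm{hor}},Y^{\mathrm{hor}}]$ as minus the curvature through the structure equation, and recognising the horizontal derivative of the frame-contracted symplectic form as the covariant derivative $\nabla^A\omega$. Both hinge on fixing the connection conventions and signs correctly; once the bracket dictionary above is pinned down, the remaining manipulations are routine applications of the three pairing identities.
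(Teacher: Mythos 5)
The paper does not prove this proposition at all: it is quoted verbatim (with attribution) from Cordero--de~Le\'on, so there is no in-paper argument to compare against. Your blind proof is the standard verification and, as far as I can check, it is correct: the bracket dictionary ($A([X^{\mathrm{hor}},Y^{\mathrm{hor}}])=-F^A(X^{\mathrm{hor}},Y^{\mathrm{hor}})$ from the structure equation, $[X^{\mathrm{hor}},\xi_{\mathcal{F}(M)}]=0$ by right-invariance of horizontal lifts, $[\xi_{\mathcal{F}(M)},\zeta_{\mathcal{F}(M)}]=[\xi,\zeta]_{\mathcal{F}(M)}$ for a right action) is the right input, the term-by-term bookkeeping in (i)--(iii) comes out with the correct signs, the identification of $X^{\mathrm{hor}}\bigl(\omega^{\hbar}(\xi_{\mathcal{F}(M)},\zeta_{\mathcal{F}(M)})\bigr)$ with the $\nabla^A\omega$ term via parallelism of the frame along horizontal curves is exactly the point of (iii), and the fibre-restriction shortcut for (iv) is legitimate since $d$ commutes with pullback to the fibre and the restricted form $\tfrac{\hbar}{2}\operatorname{Tr}(V^\top G\, W)$ has constant coefficients in the linear coordinate $E$. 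One notational caution: as this paper defines $\omega^{\mathrm{ver}}=\pi^*\omega$, the right-hand sides of (ii) and (iii) would literally vanish on vertical arguments; the intended meaning (inherited from Cordero--de~Le\'on's conventions) is the pairing through the canonical identification $V_{\bm e}\mathcal{F}(M)\simeq\bigoplus_a T_zM$, i.e.\ precisely the ``vertical pairing identity'' you invoke, so your computation lands on the substantively correct formula. It would strengthen the write-up to state that interpretation explicitly when you substitute the curvature and covariant-derivative terms.
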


\begin{corollary}\label{cor:omegahsymp}
\[
d\omega^{\hbar} = 0 \iff d\omega = 0, \quad F^A = 0, \quad \nabla^A\omega = 0,
\]
i.e., $\omega^{\hbar}$ is closed if and only if $\omega$ is closed and $A$ is a flat, symplectic (i.e. $\omega$-preserving) connection.
\end{corollary}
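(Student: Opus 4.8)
The plan is to read the corollary off the four identities of the preceding proposition, using the splitting of each tangent space $T_{\bm{e}}\mathcal{F}(M) = H_{\bm{e}}\oplus V_{\bm{e}}$ into horizontal and vertical subspaces determined by the connection $A$. Since a $3$-form is fixed by its values on all triples of tangent vectors, and every tangent vector is a sum of a horizontal lift $X^{\mathrm{hor}}$ and a vertical generator $\xi_{\mathcal{F}(M)}$, multilinearity and antisymmetry reduce the vanishing of $d\omega^{\hbar}$ to its vanishing on triples classified by how many entries are horizontal. The four resulting cases (three, two, one, or no horizontal entries) are exactly the situations (i)--(iv) computed in the proposition, so $d\omega^{\hbar}=0$ if and only if all four right-hand sides vanish identically. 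The reverse implication of the corollary is then immediate: if $d\omega=0$, $F^A=0$, and $\nabla^A\omega=0$, every right-hand side is zero.

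For the forward implication I would examine the four cases in turn. Case (iv) is identically zero and contributes no condition. In case (i), because $\pi:\mathcal{F}(M)\rightarrow M$ is a surjective submersion, the horizontal lifts sweep out every tangent direction of $M$ at each point; hence $(d\omega(X,Y,Z))\circ\pi$ vanishing for all $X,Y,Z$ is equivalent to $d\omega=0$. Cases (ii) and (iii) are where the real work lies: here one must pass from the vanishing of a bilinear pairing \emph{against all vertical test vectors} to the vanishing of the curvature $F^A$ and of $\nabla^A\omega$ themselves.

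For case (ii), the pairing of two vertical vectors induced by the diagonal lift is non-degenerate --- this is precisely the non-degeneracy established in the previous subsection, where $\omega_z(e_b,e_c)$ was observed to be an invertible matrix. Consequently, vanishing of the right-hand side of (ii) for all $\xi$ forces $F^A(X^{\mathrm{hor}},Y^{\mathrm{hor}})_{\mathcal{F}(M)}=0$; and since the principal $\GLtn$-action is free, the generator map $\gltn\rightarrow V_{\bm{e}}$ is injective, so this is equivalent to $F^A(X^{\mathrm{hor}},Y^{\mathrm{hor}})=0$. Letting $X,Y$ range over all vector fields yields $F^A=0$. Case (iii) runs along identical lines, with $(\nabla^A_X\omega)^{\mathrm{ver}}$ in place of the curvature term and $X^{\mathrm{hor}}$ varying as well; the same non-degeneracy argument gives $\nabla^A\omega=0$.

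The main obstacle is exactly this extraction step in (ii) and (iii). The danger is to treat ``the pairing vanishes on all vertical vectors, therefore the curvature vanishes'' as automatic; it is legitimate only because of two inputs that must be invoked cleanly --- the non-degeneracy of the diagonal lift restricted to $V_{\bm{e}}$, and the injectivity of $\xi\mapsto\xi_{\mathcal{F}(M)}$ coming from freeness of the principal action. Neither is hard, but both are indispensable, and the remainder of the argument is bookkeeping over the horizontal/vertical case split.
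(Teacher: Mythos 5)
Your argument is correct and follows the same route as the paper, which states the corollary without proof as an immediate consequence of the four identities in the preceding proposition; your case split by the number of horizontal entries, justified by pointwise multilinearity over the splitting $T_{\bm{e}}\mathcal{F}(M)=H_{\bm{e}}\oplus V_{\bm{e}}$, is exactly the intended reading. Your identification of the two extraction ingredients in cases (ii) and (iii) --- non-degeneracy of the diagonal lift on vertical vectors and injectivity of $\xi\mapsto\xi_{\mathcal{F}(M)}$ from freeness of the principal action --- supplies precisely the detail the paper leaves implicit.
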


\subsection{Consequences of the condition}
Assuming the conditions in Corollary \ref{cor:omegahsymp} hold, the frame bundle $\mathcal{F}(M)$ has a particularly simple description. Firstly, $F^A=0$ implies that the horizontal distribution on $\mathcal{F}(M)$ is involutive, and hence integrable. If we further assume the absense of monodromy\footnote{This is the case for example if $M$ is simply connected\textemdash if not we can always go to universal cover.}, this implies the existence of a global horizontal section $\bm{s}:M\rightarrow \mathcal{F}(M)$ through any point of $\mathcal{F}(M)$. Since the connection $A$ is symplectic, the frames $\bm{s}(z) = (s_1(z),\ldots, s_{2n}(z))$ are such that $\omega_z(s_a(z), s_b(z))$ is independent of $z$. Let us arrange for these frames to be symplectic, i.e.,
\[
\omega_z(s_a(z),s_b(z)) = \mathbb{J}_{ab}.
\]
Then $\mathcal{F}(M)$ is trivial, and we have a bundle morphism $\Lambda:M\times\GLtn\rightarrow \mathcal{F}(M)$ given by
\[
\Lambda(z,E) = \bm{s}(z)\cdot E.
\]
We use $\Lambda$ to pull back the symplectic form $\omega^{\hbar}$. Firstly,
\[
\Omega^{\mathrm{ver}} := \Lambda^*\omega^{\mathrm{ver}} = (\pi\circ\Lambda)^*\omega = \pi_1^*\omega,
\]
where $\pi_1:M\times\GLtn\rightarrow M$ is projection onto the first factor. Also, defining $\Omega^{\mathrm{hor}} := \Lambda^*\omega^{\mathrm{hor}}$, and using
\[
T_{(z,E)}\Lambda (v_z, V_E) = T_z \bm{s}(v_z)\cdot E + \bm{s}(z) \cdot V
\]
we get
\begin{align*}
\Omega^{\mathrm{hor}}_{(z,E)} ((v_z, V_E), (w_z,W_E)) &= \omega^{\mathrm{hor}}_{\bm{s}(z)\cdot E} ( \bm{s}(z)\cdot V, \bm{s}(z)\cdot W )\\  
&\qquad\qquad \textrm{since $T_z\bm{s}(v_z)\cdot E$ is horizontal}\\
&= \sum_{a,b,c=1}^{2n}\omega_z((s_b(z),s_c(z))V_{ba}W_{ca} \\
&= \operatorname{Tr}(V^\top \mathbb{J} W).
\end{align*}
Thus we have shown
\[
\Lambda^*\omega^{\mathrm{hor}} =: \Omega^{\mathrm{hor}} = \pi_2^*\Omega,
\]
where $\pi_2: M\times\GLtn \rightarrow \GLtn$ is projection onto the second factor, and $\Omega$ is the symplectic 2-form on $\GLtn$ defined in Section \ref{sec:sympMn}.

Overall, we obtain the symplectic form $\Omega^{\hbar} := \Lambda^*\omega^{\hbar}$ on $M\times \GLtn$, given explicitly by
\[
\Omega^{\hbar} = \Omega^{\mathrm{ver}} + \frac{\hbar}{2}\Omega^{\mathrm{ver}} = \pi_1^*\omega + \frac{\hbar}{2}\pi_2^*\Omega = \omega \oplus \frac{\hbar}{2}\Omega,
\]
i.e.,
\begin{equation}\label{eqn:Omegahgen}
\begin{split}
\Omega^{\hbar}_{(z,E)}((v_z,V_E), (w_z,W_E)) &= \omega_z(v_z,w_z) + \frac{\hbar}{2}\Omega_E(V_E, W_E) \\
&= \omega_z(v_z,w_z) + \frac{\hbar}{2} \operatorname{Tr}(V^\top \mathbb{J} W).
\end{split}
\end{equation}
We see that equation (\ref{eqn:Omegahgen}) agrees with equation (\ref{eqn:Omegah}), which corresponds to applying the above construction with $A$ the obvious global flat connection on $T^*\mathbb{R}^n$, and global horizontal section
\[
\bm{s}(z) = \left(\frac{\partial}{\partial q^1}\bigg\vert_z, \ldots, \frac{\partial}{\partial q^n}\bigg\vert_z,\frac{\partial}{\partial p_1}\bigg\vert_z, \ldots,\frac{\partial}{\partial p_n}\bigg\vert_z\right).
\]

\subsection{The frame bundle Hamiltonian}
Again assuming the conditions in Corollary \ref{cor:omegahsymp} hold, and given a Hamiltonian $H:M\rightarrow \mathbb{R}$, definition (\ref{eqn:liftHam2}) of the lifted Hamiltonian carries over directly. To recap, a choice of connection $A\in\Omega^1(\mathcal{F}(M),\gltn)$ and global horizontal section $\bm{s}:M \rightarrow \mathcal{F}(M)$ induces a global bundle morphism $\Lambda(z,E) = \bm{s}(z)\cdot E$. For every $E\in\GLtn$, we define $g(E)$ to be the metric on $M$ with respect to which the frame $\Lambda(z,E)$ is orthonormal. The lifted Hamiltonian $H^{\hbar}:M\times\GLtn\rightarrow \mathbb{R}$ is then defined by
\[
H^{\hbar}(z,E) := H(z) + \frac{\hbar}{4}\left(\Delta_{g(E)}H\right)(z).
\]
As before, this may also be written
\[
H^{\hbar}(z,E) := H(z) + \frac{\hbar}{4} \operatorname{Tr} (E^\top  D^2_z H E),
\]
where now $D^2H:M\rightarrow\gltn$ is given by
\[
(D^2_zH)_{ab} := s_a(z) \left( s_b H \right).
\]
Note however that by constrast with equation (\ref{eqn:liftHam1}), here $D^2H$ is not necessarily symmetric, since $s_b$ is also a function of $z$. Using a similar analysis to that in Section \ref{sec:dynamics} gives the Hamiltonian vector field generated by $H^{\hbar}$ as
\begin{equation}\label{eqn:XHhbarnonsymm}
X_{H^{\hbar}} (z,E) = \left( X_{H+\frac{\hbar}{4}\Delta_{g(E)}}(z), \frac{1}{2}\mathbb{J}(D^2_zH + (D^2_zH)^\top )\cdot E \right).
\end{equation}

Symmetry of $D^2H$ may be recovered by imposing a further condition on the connection $A$, viz. that it be \emph{torsionless}. For then, applying the torsionless condition to the individual (covariantly constant) vectors of the global frame $\bm{s}$ yields
\[
0 = \nabla^A_{s_a} s_b - \nabla^A_{s_b} s_a - [s_a, s_b] \implies [s_a, s_b] = 0,
\]
and it follows that
\[
(D^2_zH)_{ab} = s_a(z) \left( s_b H\right) = s_b(z) \left( s_a H\right) = (D^2_zH)_{ba}.
\]
Hence equation (\ref{eqn:XHhbarnonsymm}) reduces to equation (\ref{eqn:XHhbar}).

From here on, the analysis is identical to Section \ref{sec:dynamics}. In particular, $\Lambda$ intertwines the right $\Otn$-actions on the frame bundle $\mathcal{F}(M)$ and $M\times\GLtn$
\[
\Lambda(z,E)\cdot O = (\bm{s}(z)\cdot E)\cdot O = \bm{s}(z)\cdot(EO) = \Lambda(z,EO),
\]
and the latter action has a momentum map $\jO:M\times\GLtn\rightarrow \otn$ as before. Also the ($\bm{s}$-dependent) left $\Sptn$-action
\[
S\cdot (\bm{s}(z)\cdot E) = \bm{s}\cdot (SE),
\]
is intertwined with the left $\Sptn$-action on $M\times \GLtn$, and gives a momentum map $\jSp:M\times\GLtn\rightarrow \sptn$ as before.

\section*{Acknowledgements}
The author is grateful to Tomoki Ohsawa, Cesare Tronci, Manuel de Le\'on, Cornelia Vizman, Darryl Holm, and Martin Wolf for useful discussions. This work was supported by Leverhulme Trust Research Project Grant 2014-112, and the Institute of Mathematics and its Applications Small Grant Scheme.


\end{document}